\newcommand{\pp}{Decide-pp}
\newcommand{\treea}{

\begin{center}
\begin{tikzpicture}[scale=0.8, line width=0.5pt,auto]
 \path
 (2,11)  node[circle,fill=black!30,minimum size=0.1cm](1)  { }
(2,9)  node[circle,fill=black!30,minimum size=0.1cm](2) { }
 (4,8)  node[circle,fill=black!30,minimum size=0.1cm](3) {$2$ }
 (2,7)  node[circle,fill=black!30,minimum size=0.1cm](4) { } 

(2,5)  node[circle,fill=black!30,minimum size=0.1cm](5) { } 
(4,6)  node[circle,fill=black!30,minimum size=0.1cm](4') {$5$ } 

(4,3)  node[circle,fill=black!30,minimum size=0.1cm](6) { }
(6,2)  node[circle,fill=black!30,minimum size=0.1cm](7) {$3$ }
(4,1)  node[circle,fill=black!30,minimum size=0.1cm](8) { }
(6,0)  node[circle,fill=black!30,minimum size=0.1cm](9) {$4$ }
(0,3)  node[circle,fill=black!30,minimum size=0.1cm](10) { }
(0,1)  node[circle,fill=black!30,minimum size=0.1cm](11) { $1$};

\path[-,black] (1) edge[ ] node[left] {$b$}  (2) ;
\path[-,black] (2) edge[] node[left]{} (3);
\path[-,black] (2) edge[] node[left]{$a,c$}(4);
\path[-,black] (4) edge[] node[left]{$\bar{b}$}(5);
\path[-,black] (4) edge[] node[left]{}(4');
\path[-,black] (5) edge[] node[right]{$\bar{c}$}(6);
\path[-,black] (6) edge[] node[left] {} (7) ;
\path[-,black] (6) edge[] node[right] {$e$} (8) ;

\path[-,black] (8) edge[] node[above] {} (9) ;
\path[-,black] (5) edge[] node[left] {$\bar{a},d$} (10) ;
\path[-,black] (10) edge[] node[above] {} (11) ;

 \path
 (12,11)  node[circle,fill=black!30,minimum size=0.1cm](1)  { }
(12,9)  node[circle,fill=black!30,minimum size=0.1cm](2) { }
 (14,8)  node[circle,fill=black!30,minimum size=0.1cm](3) {$2$ }
 (12,7)  node[circle,fill=black!30,minimum size=0.1cm](4) { } 

(12,5)  node[circle,fill=black!30,minimum size=0.1cm](5) { } 
(14,6)  node[circle,fill=black!30,minimum size=0.1cm](4') {$5$ } 

(14,3)  node[circle,fill=black!30,minimum size=0.1cm](6) { }
(16,2)  node[circle,fill=black!30,minimum size=0.1cm](7) {$3$ }
(14,1)  node[circle,fill=black!30,minimum size=0.1cm](8) {}
(16,0)  node[circle,fill=black!30,minimum size=0.1cm](9) {$4$ }
(10,3)  node[circle,fill=black!30,minimum size=0.1cm](10) { }
(10,1)  node[circle,fill=black!30,minimum size=0.1cm](11) { $1$}

(3,0) node[](x){(a)}

(12,0) node[](x){(b)}

(14,11) node[](p){(0,0,0,0,0,0,0,0,0,0)}
(14, 9) node[ ](x) {(0,0,1,0,0,0,0,0,0,0)}
(14,7)  node[] (y) {(1,0,1,0,1,0,0,0,0,0)}
(14,5)  node[](z) {(1,0,1,1,1,0,0,0,0,0) } 
(16,3)  node[](v) {(1,0,1,1, 1,1,0,0,0,0) }
(16,1)  node[](m) {(1,0,1,1, 1,1,0,0,1,0)}
(12,3)  node[](n) {(1,1,1,1, 1,0,1,0,0,0) };

\path[-,black] (1) edge[ ] node[left] {$b$}  (2) ;
\path[-,black] (2) edge[] node[left]{} (3);
\path[-,black] (2) edge[] node[left]{$a,c$}(4);
\path[-,black] (4) edge[] node[left]{$\bar{b}$}(5);
\path[-,black] (4) edge[] node[left]{}(4');
\path[-,black] (5) edge[] node[right]{$\bar{c}$}(6);
\path[-,black] (6) edge[] node[left] {} (7) ;
\path[-,black] (6) edge[] node[right] {$e$} (8) ;

\path[-,black] (8) edge[] node[above] {} (9) ;
\path[-,black] (5) edge[] node[left] {$\bar{a},d$} (10) ;
\path[-,black] (10) edge[] node[above] {} (11) ;

\end{tikzpicture}
\end{center} }
 \newcommand{\forb}{
 \begin{center}
\begin{tikzpicture}[scale=0.7, line width=0.5pt,auto]
 \path
 (2,11)  node[circle,fill=black!30,minimum size=0.1cm](1)  { }
(2,9)  node[circle,fill=black!30,minimum size=0.1cm](2) { }
 (4,8)  node[circle,fill=black!30,minimum size=0.1cm](3) {$s_1$ }
 (1,7)  node[circle,fill=black!30,minimum size=0.1cm](4) { } 
(1,5)  node[circle,fill=black!30,minimum size=0.1cm](5) { $s_2$} 
(3,6)  node[circle,fill=black!30,minimum size=0.1cm](6) {$s_3$ }

 (12,6)  node[circle,fill=black!30,minimum size=0.001cm](s1) { $s_1$} 
(14,6)  node[circle,fill=black!30,minimum size=0.001cm](s3) { $s_3$ } 
(16,6)  node[circle,fill=black!30,minimum size=0.001cm](s2) { $s_2$}

 (13,8)  node[circle,fill=black!30,minimum size=0.001cm](a) { $a$} 
(15,8)  node[circle,fill=black!30,minimum size=0.001cm](b) { $b$ }

 (1,4) node[](v){(a)}

(9,4) node[](u){(b)}

(14,4) node[](u){(c)}

(9, 7.5) node[ ](x) {\, \, $a$ \, $b$}
(9, 7) node[ ](x) {$s_1$ 1 \, 0}
(9,6.5)  node[] (y) {$s_2$ 0 \, 1}
(9,6.0)  node[](z) {$s_3$ 1 \, 1 }

(3,9)  node[](p){(1,0)}
 (2,7) node[](r){(1,1)}
(2,5)  node[](v){(0,1)};

\path[-,black] (1) edge[ ] node[left] {$a$}  (2) ;
\path[-,black] (2) edge[] node[left]{} (3);
\path[-,black] (2) edge[] node[left]{$b$}(4);
\path[-,black] (4) edge[] node[left]{$\bar{a}$}(5);
\path[-,black] (4) edge[] node[left]{}(6);

\path[-,black] (s1) edge[ ] node[ ] { }  (a) ;
\path[-,black] (s3) edge[ ] node[] { }  (a) ;
\path[-,black] (s3) edge[ ] node[ ] { }  (b) ;
\path[-,black] (s2) edge[ ] node[] { }  (b) ;

\end{tikzpicture} 
\end{center}
}
\newcommand{\graph}{
 \begin{center}
\begin{tikzpicture}[scale=0.9, line width=0.5pt,auto]
 \path
 (1, 5)  node[](a)  { $a$}
(2, 5)  node[](b) { $b$}
 (3,5)  node[](c) { $c$}
 (4,5)  node[](d) { $d$}

(3,2) node[](x){(a)}

(8,2) node[](x){(b)}

(14.5,2) node[](x){(c)}

 (14,5.40) node[] {$a$ \, $\bar{a}$\,    $b$ \,  $\bar{b}$  \,  $c$   \, $\bar{c}$ \,   $d$ \, $\bar{d}$ }
 (14, 4.80)  node[] { 1 \,  0  \,   ?  \,   ?  \,   ? \,    ?  \,   ?  \,   ? }
    (14, 4.20) node [] { 1  \,   0  \,   1 \,     0   \,  ? \,    ? \,    ? \,    ? } 
     (14,3.60) node [] { {\bf 1}  \,  {\bf 1} \,    1  \,    0 \,    ? \,    ? \,   1  \,   0} 
       (14, 3.00) node [] { {\bf 1} \, {\bf 1} \,    ?  \,    ?  \,   1 \,   0 \,   1   \,  0} 

(1,3)  node[](1) {1} 
(2,3)  node[](2) {2} 

(3,3)  node[](3) {3}
(4,3)  node[](4) {4}

 (7, 5)  node[](a1)  { $a$}
(8, 5)  node[](b1) { $b$}
 (9,5)  node[](c1) { $c$}
 (10,5)  node[](d1) { $d$} 

(7,3)  node[](11) {1} 
(8,3)  node[](21) {2} 

(9,3)  node[](31) {3}
(10,3)  node[](41) {4};

\path[-,black] (1) edge[]  (a);
\path[-,black] (2) edge[]  (a);
\path[-,black] (2) edge[] (b);
\path[-,black] (3) edge[] (b);
\path[-,black] (3) edge[] (d);
\path[-,black] (4) edge[] (c);
\path[-,black] (4) edge[] (d) ;

\path[-,red] (31) edge[]  (a1);
\path[-,red] (41) edge[]  (a1);
\path[-,black] (21) edge[] (b1);
\path[-,black] (31) edge[] (b1);
\path[-,black] (31) edge[] (d1);
\path[-,black] (41) edge[] (c1);
\path[-,black] (41) edge[] (d1) ;

\end{tikzpicture}
\end{center}}
\newcommand{\grapho} {
 \begin{center}
\begin{tikzpicture}[scale=0.9, line width=0.5pt,auto]
 \path
 (1, 5)  node[](a)  { $c_1$}
(2, 5)  node[](b) { $c_2$}
 (3,5)  node[](c) { $c_3$}
 (4,5)  node[](d) { $c_4$}

(3,2) node[](x){(a)}

(8,2) node[](x){(b)}

(14.5,2) node[](x){(c)}

 (14,5.40) node[] { $c_1$ $\bar{c_1}$ \,$c_2$ $\bar{c_2}$ \,$c_3$ $\bar{c_3}$\, $c_4$ $\bar{c_4}$ }
 (14, 4.80)  node[] { 1 \,  0  \,   1  \,   0  \,   0 \,    0  \,   0  \,   0 }
    (14, 4.20) node [] {1  \,   1  \,   1 \,     0   \,  1 \,    0 \,    1 \,    1 } 
     (14,3.60) node []  {1 \,  1 \,    1 \,    1  \,    1 \,    0 \,   1  \,   0} 
       (14, 3.00) node [] {  1 \,  0 \,    1 \,    1  \,   1 \,   1 \,   1   \,  0} 

(1,3)  node[](1) {1} 
(2,3)  node[](2) {2} 

(3,3)  node[](3) {3}
(4,3)  node[](4) {4}

 (7, 5)  node[](a1)  { $c_1$}
(8, 5)  node[](b1) { $c_2$}
 (9,5)  node[](c1) { $c_3$}
 (10,5)  node[](d1) { $c_4$} 

(7,3)  node[](11) {1} 
(8,3)  node[](21) {2} 

(9,3)  node[](31) {3}
(10,3)  node[](41) {4};

\path[-,black] (1) edge[]  (a);
\path[-,black] (2) edge[]  (c);
\path[-,black] (1) edge[]  (b);
\path[-,black] (2) edge[] (b);
\path[-,black] (3) edge[] (c);
\path[-,black] (3) edge[] (d);
\path[-,black] (4) edge[] (a);
\path[-,black] (4) edge[] (d) ;

\path[-,red] (21) edge[]  (d1);
\path[-,red] (21) edge[] (a1);
\path[-,red] (31) edge[] (b1);
\path[-,red] (31) edge[] (a1);
\path[-,red] (41) edge[] (b1);
\path[-,red] (41) edge[] (c1) ;

\end{tikzpicture}

\end{center}
}
\newcounter{ncomm}%
\newcommand{\grb}{$G_{RB}$ }
 \newcommand{\grbc}{$G_{RB}$, }
  \newcommand{\grbd}{$G_{RB}$. }
\newcommand{\m}{$M$}
\newcommand{\me}{{$M_e$} }
\newcommand{\mec}{{$M_e$}, }
\newcommand{\med}{{$ M_e$}. }
\begin{document}
\title{The Binary Perfect Phylogeny with Persistent Characters}
\author{Paola Bonizzoni\inst{1} \and Chiara Braghin \inst{2} \and Riccardo Dondi\inst{3} \and Gabriella Trucco \inst{2}
}

\institute{Dipartimento di Informatica Sistemistica e Comunicazione \\
Univ.  degli Studi di Milano - Bicocca \\
Viale Sarca 336, 20126 Milano - Italy \\
\email{bonizzoni@disco.unimib.it}
\and
 Dipartimento di Tecnologie dell'Informazione
Univ. degli Studi di Milano, Crema \\
\email{chiara.braghin@unimi.it;gabriella.trucco@unimi.it}
\and 
Dipartimento di Scienze dei Linguaggi, della Comunicazione e degli Studi Culturali \\
Univ. degli Studi di Bergamo, Bergamo  \\
\email{riccardo.dondi@unibg.it}.
\email{}}

\pagestyle{plain}
\date{march}\maketitle

\begin{abstract}
The  binary perfect phylogeny model is too restrictive to model  biological events such as back mutations. In this paper we consider a  natural generalization of the  model that allows a special type of back mutation.  We investigate the  problem of reconstructing  a  near perfect phylogeny over a binary set of characters where   characters are {\em persistent}:
 characters  can be  gained and lost at most once. Based on this notion, we define the problem of  the Persistent Perfect Phylogeny (referred as P-PP). 
We restate the P-PP problem  as a special case of the Incomplete Directed Perfect Phylogeny,  called Incomplete  Perfect Phylogeny with Persistent Completion, (refereed as IP-PP),  where the instance is an incomplete binary matrix  $M$ having some missing entries, denoted by symbol $?$, that must be determined (or completed) as $0$ or $1$ so that  $M$ admits a binary perfect phylogeny.
We show that the IP-PP problem can be reduced to a problem over an edge colored graph since the  completion of  each column of the input  matrix  can be  represented by a graph operation. 
  Based on this graph formulation,  we develop  an exact algorithm  for solving  the P-PP problem that is exponential in the number of characters and polynomial in the number of species. 
\end{abstract}

\section{Introduction}

The perfect phylogeny   is one of the most investigated models  in different areas of computational biology.
This model derives from a restriction of the parsimony methods  used to reconstruct the evolution of species (taxa).
Such methods assume that each taxon is characterized by a set of of attributes, called characters. In this paper we focus on the binary perfect phylogeny model; characters can take only the values (or states) zero or one, usually interpreted as the presence or absence of the attribute in the taxa.  Restrictions on the type of  changes from zero to one and vice versa lead to a variety of specific models  (Felsenstein, \cite{Fel}). In the  Dollo parsimony, a character may change state from zero to one only once, but from one to zero multiple times  \cite{Pr1}.  In the variant of Camin-Sokal parsimony \cite{CS},  characters are {\em directed}, only changes from zero to one are admissible on any path from the root to a leaf. This fact means that  the root is assumed to be labeled by the ancestral state with all zero values  for each character, and  no character change back to 0 is allowed. This last variant is  known as the binary directed perfect phylogeny, and it has  a linear time solution \cite{Gus91}. 

Such a  model has been successfully applied  in the context of haplotype inference, starting from the seminal work by Gusfield on the   Perfect Phylogeny Haplotyping Problem \cite{Gus02}.  This last problem has been widely investigated, and very efficient polynomial time solutions have been proposed, including   linear-time algorithms \cite{Gus06}, \cite{Mu}, \cite{Boniz}. However, the real data usually do not fit the simple model of the binary perfect phylogeny and thus in the past years generalizations of the model have been proposed.  Some models are surveyed in \cite{FB00}.

A central goal in this investigation  of the binary perfect phylogeny model is to extend its applicability by taking into account the biological complexity of  data, while retaining  the computational efficiency where possible. More precisely, the binary perfect phylogeny model though allowing a very efficient reconstruction is quite restrictive to explain the evolution of data where   homoplasy events such as back mutations, also called reversals,  are present.  In order to include such events,   the problem of reconstructing the near-perfect phylogeny    has been formalized and investigated.  Some work has been done to produce algorithmic  solutions to the problem, mainly fixed-parameter algorithms have been provided \cite{near-perfect2}, \cite{near-perfect1}.
However,  the near-perfect phylogeny   model appears to be too general for some biological applications.  The model  does not distinguish the main two types of homoplasy occurring in a phylogenetic tree: recurrent mutation  and back mutations.    Back mutations are changes in the  state of the character that only occur along the same path from the root of the tree. On the contrary,  recurrent mutations are changes  in the  state of  the  character  that occur  along different paths of the tree, since the character is allowed to label multiple edges of the tree. In this paper we address the problem of constructing a perfect-phylogeny under the assumption that  only a special type of back mutation  may occur in the tree. A character may change state only twice in the tree,  precisely from $0$ to $1$ and from $1$ to $0$,  and the changes occur along the same path from the root of the tree $T$. These characters have  already been considered in the literature  and called {\em persistent}  by T. Przytycka \cite{Pr06} in a general framework of tree inference.  More precisely,  in  \cite{Pr06},  the change of a character from state $0$ to $1$ models the gain of  the character, while the change from $1$ to $0$ models the loss of the character.  
The use of the notion of persistent character is quite relevant when reconstructing phylogenies that describe the gain and loss of genomic characters \cite{zeng}.  An example of a promising class of genomic characters (also called rare genomic changes - RGC - ) is given by  insertion and deletion  of ‘‘introns’’ in protein-coding genes during the evolution of eukaryotes. In this framework,   persistent characters allow to infer phylogenies by using the gain and loss of introns \cite{zeng}. 

We define a generalization of the (rooted) binary directed perfect phylogeny  where each character may be persistent.   Clearly our model is a restriction of the Dollo parsimony,  where characters can be lost several times, i.e. a character can be lost along different paths from a root to a leaf. Acquisition or loss of characters (i.e. attributes) when
unrestricted could make  the reconstruction of an  evolutionary tree difficult, if not possible.

 Assume that $S = \{s_1, \dots, s_n\}$ is a
set of {\em species} and $C = \{c_1, \dots, c_m\}$ is a set of
{\em characters}. In the paper we consider binary matrices representing
species and characters. More precisely,    a binary matrix
$M$ of size $n \times m$ has columns associated with the set
$C$ of characters, i.e. column $j$ represents character $c_j \in
C$, while rows of $M$ are associated with the set $S$ of
species, i.e. row $i$ represents species $s_i$. Then $M[i,j] = 1$
if and only if species $s_i$ has character $c_j$, otherwise $M[i,j] = 0$. 

In the rest of the paper,  to simplify the notation, we identify rows with species and columns with characters.

The gain of a character in a  phylogenetic tree is usually represented by an edge labeled by the character.   In order to  model  the presence of persistent
characters, the loss of a
character $c$  in the tree is represented  by  an edge that  is labeled by the negation of $c$, or negated character, denoted by $\bar{c}$ .   Clearly, an edge labeled by a negated character  follows  an edge labeled by the  character along a path from the root to a leaf.  
 The following definition is based on the  general  coalescent model given in \cite{EHK} to  represent the evolution of    haplotype sequences and assume that nodes are labeled by vector states of characters.

Formally, we have:

{\bf Persistent Perfect Phylogeny}
Let  $M$ be a binary matrix of size $n \times m$. Then a {\it
persistent perfect phylogeny}, in short  {\em  p-pp tree} for $M$, is a rooted tree $T$
that satisfies the following properties:

\begin{enumerate}

\item each node $x$ of $T$ is labeled by a vector $l_x$ of length $m$;

\item  the root of $T$ is label  by a vector of all zeros,  while  for each  node $x$ of $T$ 
the value $l_x[j]=0, 1$ represents  the state,  $0$ or $1$ respectively, of character $c_j$ in tree $T$;

\item   for each character $c_j$ there are at most  two  edges $e=(x,y)$ and $e'=(u,v)$ 
such that $l_x[j] \neq l_y[j]$ and $l_u[j] \neq l_v[j]$ 
(representing a change in the state of $c_j$) such that $e$, $e'$  occur along the same path 
from the root of $T$ to a leaf of $T$; if $e$ is closer to the root than $e'$, then 
 the  edge $e$ where $c_j$ changes from $0$ to $1$ is labeled $c_j$, 
while  edge $e'$ is labeled $\bar{c_j}$,

\item  each row  of $M$  labels exactly one leaf of $T$.

\end{enumerate}

In the classical  definition of a  Perfect Phylogeny Tree,  in short {\em pp} tree, 
no negated characters are allowed in the tree (see \cite{Setubal}) (see definition in Section \ref{pre1}). Observe that by the above definition  of p-pp tree,  for each  leaf $s$ of tree $T$, the positive characters that label edges 
that are along the unique path from the root to $s$ and
do not occur as  negated along the same path, specify  exactly the characters
that have value $1$ in the row    $s$ of $M$.

Thus, let us state the main problem investigated in the paper.

\noindent

{The \bf  Persistent Perfect Phylogeny
 problem (P-PP):} Given a binary matrix $M$,  returns a p-pp tree for $M$ if such a tree exists.

 \vspace{.2in}
 
 In the paper we investigate the  solution of the P-PP problem. Our main contribution is a graph-based restatement of the problem that allows us to provide an exact algorithm for the problem having a worst time complexity that is polynomial in the number  $n$ of rows of the matrix and exponential   in the number $m$ of characters.  
Since in  real data the number of characters is usually small, while the number of species may be very large, the algorithm could be  efficient even on large instances  as shown by an experimental analysis illustrated in Section \ref{experiment}.
 

 The graph-based solution of   the P-PP problem is obtained by  restating the problem as a  variant of the Incomplete Directed Perfect Phylogeny \cite{Sha}, called Incomplete  Perfect Phylogeny with Persistent Completion (IP-PP), where the input data of this last problem is a specific incomplete matrix  $M$ over values $0,1,?$  and the goal is to complete values $?$ into $0$ or $1$ so that $M$ admits a classical perfect phylogenetic tree.
 Then we show  that the IP-PP problem reduces to the problem of reducing a colored graph by a graph operation  that represents  a completion of a column of the input matrix.    Based on these ideas we discuss our exact algorithm for the P-PP problem.
 
  We believe that the graph-based formulation of the problem could help in investigating polynomial time solutions to the problem.

 \section{The Perfect Phylogeny model: preliminaries}
 
 \label{pre1}

Let us give the definition  of a perfect phylogeny for a binary matrix  and  some relevant basic results that will be used in the paper.

 \noindent
 {\bf Perfect Phylogeny}
 
 Let  $M$ be a binary matrix of size $n \times m$. Then a {\it directed
perfect phylogeny}, in short  {\em  pp tree} for $M$, is a rooted tree $T$
that satisfies the following properties:

\begin{enumerate}
 
 \item each node $x$ of $T$ is labeled by a vector $l_x$ of length $m$;
 
 \item  the root is labeled  by a vector of zeros, while for each  node $x$, the value 
 $l_x[j] = 0,1$ represents  the state,  $0$ or $1$ respectively, of character $c_j$ in tree $T$;

\item  for each character $c_j$ there is at most one edge $e=(u,v)$, labeled $c_j$, such that
$l_u[j] \neq l_v[j]$ (notice that $l_u[j]=0$, while $l_v[j]=1$);
edge $e$ represents a changing of state of $c_j$; 

\item each row of matrix $M$ labels exactly one leaf of $T$.

\end{enumerate}
 
The algorithmic solution of the Perfect Phylogeny model has been investigated in \cite{Gus91}, where a linear time algorithm is provided. 
In particular, the paper \cite{Gus91} provides a well known characterization of matrices admitting a perfect phylogeny.
A binary matrix $M$ admits a perfect phylogeny if and only if it does not contain a pair of columns and three rows inducing    the configurations $(0,1), (1,0)$ and $(1,1)$,  also known as   {\em forbidden matrix} (see Figure \ref{forb} (b)). We will use this  characterization  in the paper.

In particular, the forbidden matrix has a representation by means of a  graph consisting of a path of length four containing three species and two characters; this graph  is   called $\Sigma$-graph. Such a graph is obtained by drawing an edge between every pair of  species and  characters having value $1$ in the matrix (see Figure \ref{forb} (c)). 

Notice that the forbidden matrix is the smallest matrix that does not admit a pp tree. However, by allowing a character to be persistent, the matrix admits a persistent perfect phylogeny, as shown in Figure \ref{forb} (a).

A well known concept that  has been used several times in the framework of the perfect phylogeny is a graph representation of the four configurations $(0,1)$, $(1,1)$, $(1,0)$ and $(0,0)$ (called four gametes): the {\em  conflict graph}. 
We say that two positive characters $c, c'$ of matrix $M$ are in {\em conflict} in matrix  $M$, if and only if   the pair of columns $u,v$ of $M$ induces the four gametes.


\begin{definition}[conflict graph]
\label{sigmag}
Let $M$ be a binary matrix.
The {\it conflict graph} associated with  matrix $M$ is the
undirected graph $G_c =( C,E \subseteq C \times
C)$ where  a pair $(u,v) \in E$ if and only if $u, v$ are in conflict in matrix  $M$.
 \end{definition}


Notice that when  $M$ has  a  conflict graph with no edges,   $M$ does not necessarily admit   a rooted perfect phylogeny,  since $M$ could contain an occurrence of the forbidden matrix. For example, the forbidden matrix has a conflict-graph with  no edges.

In this paper we define a variant  of the the Incomplete Directed Perfect Phylogeny (in short IDP)   \cite{Sha}. The input data of the IPP problem is a matrix over symbols $0, 1, ?$ where symbol $?$ is used to denote  an entry of the matrix that is not determined. Then the IPP problem is finding a completion of the matrix, i.e. assigning values $0, 1$ to $?$ symbols so that the matrix admits a perfect phylogeny.   

For basic  notions of graph theory used in the paper see \cite{Cormen}.

\begin{figure}[ptb]
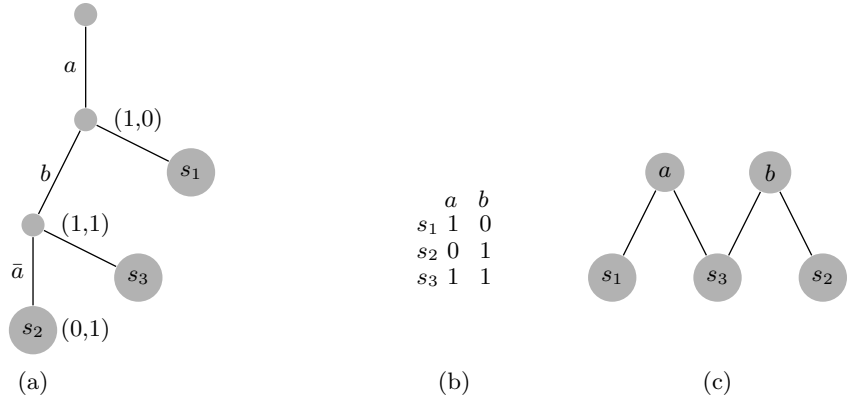

\begin{center}
\forb

\end{center}
\caption{The Figure (a) illustrates the perfect persistent phylogeny for the forbidden matrix reported in Figure (b) and the $\Sigma$-graph for the forbidden matrix in Figure (c). 
} 
\label{forb}
\end{figure}
 
\section{The Incomplete  Perfect Phylogeny with Persistent Completion}

Let $M$ be a binary $n \times m$ matrix which is an
instance of the P-PP problem. The  {\em extended matrix} associated with $M$ is a matrix \me of size ${n \times 2m}$\ over alphabet $\{0,1,?\}$  which is obtained by replacing each column $c$ of $M$ by a pair of columns $(c, \bar{c})$, where $c$ is the positive character, while $\bar{c}$ is the negated character,  moreover for each   row $s$  of $M$, it holds:

\begin{enumerate}

\item    if $M[s,c] = 1$, then  $M_e[s,c] = 1$ and $M_e[s,\bar{c}] = 0$,

\item    if $M[s,c] = 0$, then  $M_e[s,c] = ?$ and $M_e[s,\bar{c}] = ?$.
\end{enumerate}


Informally, the assignment of the pair $(?,?)$ in a species row $s$ for the pair of   entries in columns $c$ and $\bar{c}$ means that character $c$ could be persistent in species $s$, i.e. it is gained and then lost. On the contrary, by definition of a persistent perfect phylogeny,  the pair $(1,0)$ assigned in  a species  row $s$ for the pair of   entries in columns $c$ and $\bar{c}$, means that character $c$ is only gained by the species $s$.

In the paper, we will use  the term extended matrix to denote an extended matrix associated with a binary matrix and defined as above. 
A {\em completion  of  a character $c$}   of matrix \me  is
 obtained by solving the pair $(?,?)$ given in  the pair of columns   $c$, 
$\bar{c}$     by the   value $(0,0)$ or
$(1,1)$.   If a character  $c$ is completed,  then it is  called {\em active}.

A  {\em completion  of matrix}   \me is a completion
of all characters of \mec while a {\em partial completion  of} \me
is a completion of zero  or more  characters of \med

Figure \ref{me} (a) shows an example of  input  matrix $M$ for the P-PP problem. Then Figure \ref{me} (b) shows the incomplete  matrix \me associated with $M$. A possible completion of \me is given in  Figure \ref{me} (c). 

\begin{figure}[htbp]
\begin{center}
\begin{tabular}{c c c c c  c c c c c c c c c c   c c c c c c c c c c  c c c c c c c c c c   c c c c c c c c c c}

$a$ & $b$ & $c$ & $d$ & $e$ & & & & & & & & & & & $a$& $\bar{a}$ & $b$& $\bar{b}$ & $c$& $\bar{c}$ & $d$& $\bar{d}$ & $e$& $\bar{e} $& & & & & & & & & & & $a$ & $\bar{a}$ & $b$ & $\bar{b} $ & $c$& $\bar{c} $ & $d$ & $\bar{d} $& $e$ & $\bar{e}$\\
0 & 0 & 1 & 1 & 0 & & & & & & & & & & & ?&  ?&  ?&  ?& 1&  0& 1&  0& ?&  ?& & & & & & & & & & & 1& 1&  1& 1&  1&  0& 1&  0& 0&  0\\
0 & 1 & 0 & 0 & 0 & & & & & & & & & & & ?&  ?&  1&  0& ?&  ?& ?&  ?& ?&  ?& & & & & & & & & & & 0& 0&  1& 0&  0&  0& 0&  0& 0&  0\\
1 & 0 & 0 & 0 & 0 & & & & & & & & & & & 1&  0&  ?&  ?& ?&  ?& ?&  ?& ?&  ?& & & & & & & & & & & 1& 0&  1& 1&  1&  1& 0&  0& 0&  0\\
1 & 0 & 0 & 0 & 1 & & & & & & & & & & & 1&  0&  ?&  ?& ?&  ?& ?&  ?& 1&  0& & & & & & & & & & & 1& 0&  1& 1&  1&  1& 0&  0& 1&  0\\
1 & 1 & 1 & 0 & 0 & & & & & & & & & & & 1&  0&  1&  0& 1&  0& ?&  ?& ?&  ?& & & & & & & & & & & 1& 0&  1& 0&  1&  0& 0&  0& 0&  0\\

\multicolumn{5}{c}{(a)} & & & & & & & & & & & \multicolumn{10}{c}{(b)}  & & & & & & & & & & & \multicolumn{10}{c}{(c)}\\

\end{tabular}
\end{center}
\caption{The figure  illustrates a binary matrix \m (a) and its  extended matrix \me (b)  and a completion of \me  (c).}
\label{me}
\end{figure}

We introduce below a problem to which we reduce P-PP, as shown in Theorem \ref{equivalence}.


\noindent
{\bf Incomplete  Perfect Phylogeny with Persistent Completion  Problem (IP-PP)}

{\em Instance:} An extended matrix  $M_e$ over   $\{0,1,?\}$.  

{\em Question:}  give a  completion $M'$ of the extended   matrix  $M_e$  such  that $M'$ admits a perfect phylogeny, if it exists.

\vspace{.2in}

Thus we state the first result of the paper. In order to prove the result we assume that the input matrix $M$  does not have columns consisting of zeros, that is for each character $c$, there must exist a species having the character. As a consequence of this assumption, given the extended matrix $M_e$ of $M$, it must be that for each positive character $c$ there is a species having   the positive character  $c$ and not the negated character $\bar{c}$.

\begin{theorem}
\label{equivalence}
Let $M$  be a binary matrix and $M_e$ the extended matrix   associated with $M$. Then $M$  admits a p-pp tree if and only if there
exists a  completion  of $M_e$   admitting
a pp tree.
\end{theorem}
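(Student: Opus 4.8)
The plan is to prove both directions by exhibiting a structure-preserving correspondence between the edges of a p-pp tree for $M$ and the characters of a completion of $M_e$. The guiding idea is that the \emph{gain} of character $c_j$ (its $0 \to 1$ change, labeled $c_j$) corresponds to the positive character $c_j$ of $M_e$, while the \emph{loss} of $c_j$ (its $1 \to 0$ change, labeled $\bar{c_j}$) corresponds to the negated character $\bar{c_j}$ of $M_e$, now treated as an ordinary character that changes $0 \to 1$ in a standard pp tree. The one structural fact I will use repeatedly is that every completion $M'$ of $M_e$ satisfies $M'[s,\bar{c_j}] = 1 \Rightarrow M'[s,c_j] = 1$: by construction of $M_e$ the pair $(c_j,\bar{c_j})$ in row $s$ can only be $(1,0)$, $(0,0)$ or $(1,1)$, and never $(0,1)$.

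For the forward direction, suppose $T$ is a p-pp tree for $M$. First I would relabel $T$ into a candidate pp tree $T'$ by keeping the tree shape and declaring that the edge labeled $c_j$ changes character $c_j$ from $0$ to $1$ and the edge labeled $\bar{c_j}$ changes the separate character $\bar{c_j}$ from $0$ to $1$; I then recompute the $2m$-dimensional state vectors $l_x$ accordingly. For each leaf $s$ I read off a completion of $M_e$: a leaf whose root path crosses the gain edge but not the loss edge gets $(1,0)$ (forced, and matching $M[s,c_j]=1$), a leaf crossing both gets the completion $(1,1)$, and a leaf crossing neither gets $(0,0)$. The p-pp property that the loss edge follows the gain edge along a common path guarantees that no leaf can cross a loss edge without first crossing the corresponding gain edge, so the assignment is a legitimate completion of $M_e$. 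Finally I would check that $T'$ is a genuine pp tree: each of the $2m$ characters changes state at most once (each gain edge and each loss edge occurs at most once in $T$), the root is all-zero, and the leaves carry exactly the rows of the completion $M'$ just defined.

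For the reverse direction, suppose $M'$ is a completion of $M_e$ with a pp tree $T'$. Here the key is the compatibility characterization stated earlier: in any pp tree the set of leaves carrying $1$ in a given character forms a subtree, and two characters are compatible (do not induce the forbidden configurations $(0,1),(1,0),(1,1)$) exactly when their sets of $1$-leaves are nested or disjoint. Since $M'[s,\bar{c_j}]=1$ implies $M'[s,c_j]=1$, the $1$-leaf set of $\bar{c_j}$ is contained in that of $c_j$; invoking the standing assumption that $M$ has no all-zero column (so some species has $c_j$ but not $\bar{c_j}$) makes this containment strict, forcing the edge of $\bar{c_j}$ to lie strictly below the edge of $c_j$ in $T'$, hence on a common root-to-leaf path. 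I would then reinterpret $T'$ as a p-pp tree $T$ for $M$ by treating the edge of $c_j$ as the gain of $c_j$ and the lower edge of $\bar{c_j}$ as its loss, recomputing the length-$m$ state vectors so that $c_j$ is present at a leaf exactly when its path crosses the gain but not the loss edge. A case analysis over the three admissible values of $(c_j,\bar{c_j})$ then shows that each leaf's recomputed $c_j$-state equals $M[s,c_j]$, so $T$ realizes $M$.

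I expect the main obstacle to be the reverse direction's \emph{same path} requirement in the definition of a p-pp tree: a pp tree for $M'$ only guarantees that the $1$-leaf sets of $c_j$ and $\bar{c_j}$ are nested, which a priori places the two edges in an ancestor--descendant subtree relation but not obviously on a single root-to-leaf path. Turning nesting of the $1$-leaf sets into the correct ancestor ordering of the edges, and ruling out the degenerate case where the two sets coincide (which is precisely where the no-all-zero-column hypothesis is needed), is the delicate step; once it is established, the edge-to-character bijection and the bookkeeping of state vectors in both directions are routine.
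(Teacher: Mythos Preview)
Your proposal is correct and follows essentially the same route as the paper: reuse the same tree in both directions, reinterpreting gain/loss edges as two separate $0\to 1$ characters, and use the species-set containment $\{s:\bar{c_j}=1\}\subseteq\{s:c_j=1\}$ together with the no-all-zero-column hypothesis to force the $\bar{c_j}$-edge below the $c_j$-edge. The paper's argument is the same, only phrased more tersely (it constructs the leaf relabeling $l'_x[j]=1\iff l_x[2j-1]=1\wedge l_x[2j]=0$ and appeals directly to the inclusion of species sets).

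One minor remark: the ``obstacle'' you flag in the reverse direction is less delicate than you suggest. In a rooted tree, strict nesting of the $1$-leaf sets already forces the edge of $\bar{c_j}$ to lie in the subtree below the edge of $c_j$; any root-to-leaf path through the lower edge then passes through the upper one, so the same-path condition is automatic once you have the ancestor relation. No extra work is needed there.
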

\begin{proof}
({\em If }) Let $M'$ be  a  completion of matrix $M_e$ such that $M'$ admits a  perfect phylogeny  $T$.

In the following we show that apart from the labeling of internal nodes of tree $T$, the tree $T$ is a p-pp  tree for matrix $M$. More precisely, we obtain  a  p-pp tree $T'$ for matrix $M$ by changing the labeling of tree $T$ as follows. 
In order to distinguish the labeling of node $x$ in tree $T$ from the new labeling of the same node $x$ in tree $T'$, we denote the new labeling of $x$ in $T'$ by the vector $l'_x$.
Given 
a node $x$ of tree $T$ labeled by a $2m$ vector $l_x$, then the label  $l'_x$  of node $x$ in tree  $T'$ is defined as follows:

for each $j$, with $1 \leq j \leq m$, $l'_x[j] = 1$ if and only if $l_x[2j -1]=1$ and $l_x[2j]=0$, otherwise  $l'_x[j] = 0.$
Informally, a character $c_j$ has value $1$ in vector $l'_x$ if and only if  $c_j$ occurs as $1$  in vector $l_x$ and it is does not occur as negated in $l_x$, that is  character $\bar{c_j}$ has value $0$ in $l_x$.

We first show that property (4) of the definition of a persistent perfect phylogeny holds for tree $T'$ for matrix $M$, i.e. each row $s$ of $M$ labels a leaf  in tree $T'$. Now, row $s$ of the completion $M'$ labels a leaf  $l_s$ of tree $T$.
We can easily show that $l'_s$ is equal to row $s$ in matrix $M$.   
This fact follows from the observation that characters that have value $1$ in row $s$ of $M$ still have value $1$ in row $s$ of the completion $M'$. By definition of a completion, only   a character having value $0$ in $M$ may be persistent
along a path of tree $T$, i.e. it labels  an  edge of the path and its negated character labels another edge of the same path.
Consequently, the characters associated with the edges along the unique path of $T$ from the root to $s$ and which are not negated are exactly those having value $1$ in row $s$ of $M$, that is $l'_s $ is equal to row $s$ of $M$, as required.

Now,   properties (1) to (3)  given by definition of a persistent perfect phylogeny $T'$ for matrix $M$ follow from the fact that $T$ is a perfect phylogeny for matrix $M'$, and thus each character is associated with exactly one edge of the tree, which implies the same property for each negated character $\bar{c}$.  Observe that by  definition of extended matrix a negated character occurs in a row if and only if its positive character occurs in the same row.  Moreover, since we assume that  for each character $c$ matrix $M$ must have a a species that contains  the positive character $c$, but not the negated character $\bar{c}$, 
 it is immediate to verify that every negated character $\bar{c}$ follows character $c$ along a path from the root to a leaf, thus proving that change of state of $\bar{c}$ from $0$ to $1$ (that is from $1$ to $0$ in tree $T'$)  occurs after the change of state from $0$ to $1$ of character $c$.  In fact, by definition of a completion the set of species  having character $c$ includes the set of species having character $\bar{c}$, since columns $c$ and $\bar{c}$ both have values $1$ or $c$ has value $1$ and $\bar{c}$ has value $0$.

({\em Only if }) Vice versa, let us now show that if there exists a persistent perfect phylogeny $T$ for matrix $M$, then there exists a completion $M'$ of $M_e$ such that $M'$ admits a perfect phylogeny.
We can associate to tree $T$ a matrix $M_{T}$ of size $n \times 2m$ as follows.
  For each  character $c$ of $M$ add a new column $\bar{c}$. Then consider each row $s$ of matrix $M_T$ such that
   a negated character $\bar{c}$ occurs along the path from the root to $s$. Then set  value $1$ for row $s$   in columns $c$  and $\bar{c}$ of matrix $M_T$.
Notice that  $M_T$ is a completion of $M_e$ and clearly $T$ is a perfect phylogeny for $M_T$. 
\qed
\end{proof}



\section{The red-black graph}
\label{section-graph}

In this section  we give a graph representation of  an extended matrix, and we define a graph operation that represents  a special type of completion of  the pair of  columns of the matrix associated with a  character.

Let $M_e$ be an extended matrix $M_e$, then  the {\em red-black} graph $G_{RB}$ for matrix $M_e$ consists of the edge colored graph $(V,E)$ where $V= C \cup S$,  given $C= \{c_1, \cdots, c_m\}$ and $S = \{s_1, \cdots, s_n\}$ the set of positive characters and species of matrix $M_e$, while $E$ is defined as follows:
$(s,c) \in E$ is a black edge  if and only if $M_e[s,c]=1$ and $M_e[s,\bar{c}]=0$.



Then we define a graph operation  on nodes (characters) of the graph \grb that represents a {\em canonical} completion of characters  and consists of adding red edges, removing black or red edges. 
This  graph  operation  over characters (nodes)  of  the red-black graph may be iterated till the graph has only {\em active} characters, as defined below.

\noindent
{\bf Realization of a character $c$  and its canonical completion}

Let   ${\cal C}(c)$ be the connected component of graph \grb containing node $c$.  The {\it realization of   character $c$}  in graph \grb  consists of:

\begin{itemize}

 \item  (a) adding  red edges connecting character $c$ to all species nodes $s$ that are in ${\cal C}(c)$ and such that $(c,s)$ is not an edge of \grbc
  
   \item

  (b) removing  all black edges $(c,s)$ in graph \grbc  Then $c$ is labeled {\em active}.
  
  \item (c) if  a character $c'$ is connected by red edges to all species of ${\cal C}(c)$,  then  $c'$ is called  {\em free}.  Then its outgoing edges are deleted from the graph.  
 
 \end{itemize}
 
 The realization of a character $c$ is associated with a special  completion in matrix $M_e$ of the given character, called canonical.
The {\em canonical} completion of  character $c$ in matrix $M_e$ is defined by completing
each pair $(?,?)$  occurring in the pair of  columns $c$ and $\bar{c}$ as follows:
the pair $(?,?)$  is completed by   $(1,1)$ in  every species $s$ that is in the  component ${\cal C}(c)$ of graph \grbc while value   $(0,0)$ is assigned in the remaining  rows.


\begin{figure}[ptb]
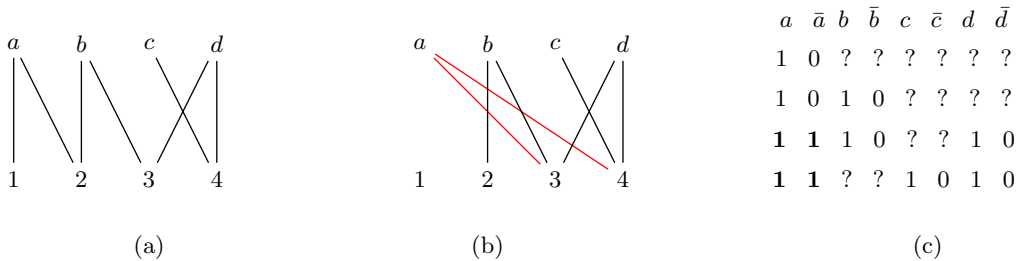



\graph
\caption{The Figure (a) illustrates the  graph \grb of the extended matrix associated with the matrix   of  Example \ref{ex}.  Then  (b)   illustrates the graph  \grb after the realization of character $a$.  Then (c)  illustrates \me after the completion of $a$.}
\label{graph4x4}
\end{figure}

 \begin{example}
 \label{ex}
Figure \ref{graph4x4} (a) illustrates   the red-black graph  of the  extended matrix  associated with matrix $M$ consisting  of rows  $1000, 1100,  0101, 0011$, numbered  $1,2,3$ and $4$, respectively.  Characters of $M$ are denoted by letters $a,b,c$ and $d$. Then Figure \ref{graph4x4} (b) illustrates  the red-black graph obtained     after the realization of   character $a$, while Figure \ref{graph4x4} (c) reports  the corresponding canonical completion in \me of character $a$.

\end{example}

Informally, the red edges of graph \grb  incident to a character $c$ that has been realized  represent  the  pairs  $(?,?)$ in columns  $(c, \bar{c})$   that are completed as $(1,1)$ in matrix $M_e$. 

In the following we call  {\em e-empty} a red-black graph without edges.

\begin{remark}
\label{remark-sigma}

Let $r$ be a sequence of  all characters of a red-black graph for an extended matrix $M_e$, and let  $G_r$  be the graph produced after the realization of the characters in $r$ one after another.  Clearly,  the realization of characters in $r$ produces a completion of the matrix $M_e$.  Then, either $G_r$ is e-empty or $G_r$ contains two nodes inducing a   $\Sigma$-graph. Observe that if  $G_r$ is not e-empty, it must have only red edges  that are  incident to at least two characters of the graph.   In fact,  assume on  the contrary that $G_r$ has a single character  $c$ that  is incident to red edges. Then $c$ must be connected  to all species nodes in the same connected component of the graph. But, this fact leads to a contradiction  since  $c$ is free and all red edges incident to $c$ are deleted from the graph. By  inspection of the possible cases, it is easy to verify that the   minimum size connected component of $G_r$ induces a $\Sigma$-graph consisting of two characters and three species. 
 Such a graph represents the presence of a  forbidden matrix in the  completion $M'$ of matrix $M_e$,  and  hence $M'$ does not admit a  perfect phylogeny (see Section \ref{pre1}). 
 
\end{remark}

The following example illustrates an application of the previous Remark \ref{remark-sigma}.

\begin{example}
\label{exx}
Let  $M$ be  a matrix  having the four characters $c_1$, $c_2$, $c_3$ and $c_4$ and rows $(1, 1, 0,0) $, $ (0, 1, 1, 0)$, $(0,0,1,1)$ and $(1,0,0,1)$, numbered $1, 2, 3$ and $4$, respectively.   Let $G_R$  be the graph obtained after the realization of the sequence $r=<c_1, c_2, c_3, c_4>$  of characters.  Then $G_R$ consists  of  the path $<c_4, 2,c_1,3,c_2,4,c_3>$ with red edges. Then graph $G_R$ induces the $\Sigma$-graph consisting of path  $<2,c_1,3,c_2,4>$.   Observe that the completion $M'$ of  the extended matrix $M_e$ of $M$ consists of rows $(1,0, 1,0, 0,0,0,0)$, $(1,1,1,0,1,0, 1,1)$,  $(1,1,1,1,1,0, 1,0)$ and $(1,0,1,1,1,1, 1,0)$ and such a matrix does not admit a perfect phylogeny as characters $c_1$ and $c_2$ and species $2,3,4$ induce the forbidden matrix in the completion $M'$.
Figure \ref{grapho} illustrates the example.
\end{example}

\begin{figure}[ptb]
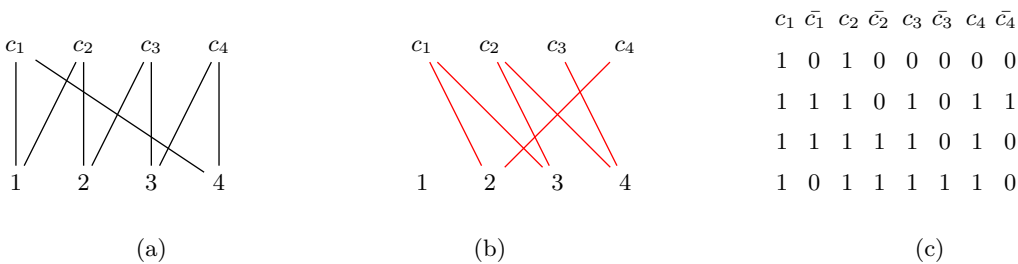



\grapho
\caption{The Figure (a) illustrates the  red-black graph of the Example  \ref{exx} .  Then  (b)   illustrates graph $G_R$ of the example,  while (c)  illustrates   the completion of $M_e$ induced by the realization of sequence $r$.}
\label{grapho}
\end{figure}

Since
 we are interested in computing  canonical completions of the matrix that admit a pp tree   by  the previous Remark \ref{remark-sigma} only canonical completions that are obtained by the realization of  special   sequences of characters of the red-black graph are considered, as defined below.

\begin{definition}

\label{suc-red}
 Given a graph \grb for an extended matrix \mec  a {\em successful  reduction} of \grb is an ordering  $r = <c_{i_1}, \cdots, c_{i_m}>$ of  the set of positive  characters $\{c_1, \cdots, c_m\}$ of the matrix such that the 
consecutive realization of each character in $r$ (which removes
 black edges from graph $G_{RB}$)   leaves an   e-empty red-black graph.

\end{definition}

In   Section \ref{prova}, we show that finding a solution to an instance  of the IP-PP problem   is equivalent to computing the existence of a successful reduction for the red-black graph for the input matrix.
More precisely, let $M_e$ be an instance of the IP-PP problem.  In the following Theorem   \ref{main-reduction},  we prove  that if   \me admits a pp tree $T$, then there exists a successful reduction of graph $G_{RB}$.
Vice-versa,  we show that  a successful reduction of  the red-black graph for  $M_e$ provides a completion $M'$ of the matrix \me that admits a pp tree, thus giving   a  solution to the IP-PP instance..

\begin{theorem} 
\label{main-reduction}   Let \me\ be an extended matrix.
Then  \me\ admits a   perfect phylogeny,  if and only if
 there exists a successful reduction of the graph \grb \ for \med

\end{theorem}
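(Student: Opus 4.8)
The plan is to prove both implications by tying the two objects together through a single quantity: the existence of a forbidden submatrix (a pair of columns inducing rows $(0,1),(1,0),(1,1)$) in a canonical completion. Gusfield's characterization says a binary matrix admits a perfect phylogeny iff it has no such submatrix, and Remark \ref{remark-sigma} already says that after realizing all characters of $G_{RB}$ in an order $r$ the resulting graph $G_r$ is either e-empty or contains a $\Sigma$-graph (equivalently, a forbidden submatrix of the canonical completion $M'_r$). So the strategy is to translate ``$M_e$ admits a perfect phylogeny'' and ``$G_{RB}$ has a successful reduction'' into statements about forbidden submatrices of a canonical completion, and let Remark \ref{remark-sigma} and Gusfield's characterization do the matching.

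For the (only if) direction I would start from a completion $M'$ of $M_e$ that admits a pp tree $T$, and read off an ordering $r=\langle c_{i_1},\dots,c_{i_m}\rangle$ of the positive characters by nondecreasing depth of their gain-edges in $T$ (characters gained nearer the root first, ties broken by the ancestor relation of $T$). I would then prove by induction that when $c_{i_k}$ is realized its connected component $\mathcal{C}(c_{i_k})$ coincides with the set of species lying in the subtree below the gain-edge of $c_{i_k}$, so that assigning $(1,1)$ to the species of $\mathcal{C}(c_{i_k})$ reproduces exactly the persistent species of $c_{i_k}$ in $T$, and the free-character rule (c) discards precisely those characters whose species set is already nested inside an earlier one. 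This identification yields that the canonical completion $M'_r$ equals the completion induced by $T$ (the matrix $M_T$ of the proof of Theorem \ref{equivalence}), which admits $T$ as a perfect phylogeny; hence $M'_r$ has no forbidden submatrix, and by the contrapositive of Remark \ref{remark-sigma} the final graph $G_r$ is e-empty, so $r$ is a successful reduction.

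For the (if) direction I would take a successful reduction $r$. By Definition \ref{suc-red} the realization of $r$ leaves $G_r$ e-empty, and it determines the canonical completion $M'_r$. Here Remark \ref{remark-sigma} alone is not enough, since it only gives ``non-e-empty $\Rightarrow$ forbidden submatrix''; I need the converse refinement, namely that any pair of characters inducing the pattern $(0,1),(1,0),(1,1)$ in $M'_r$ must survive the whole realization as two active characters joined by red edges to three common species, i.e. as a $\Sigma$-graph. Establishing this amounts to tracking each of the three row patterns through the realization rules (a black edge is removed when its character is realized, a red edge is created on a $(1,1)$ completion, no edge on a $(0,0)$ completion) and through the free-character deletion rule (c), to show such a configuration cannot be destroyed. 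Since $G_r$ is e-empty there is no $\Sigma$-graph, hence no forbidden submatrix in $M'_r$, so by Gusfield's characterization $M'_r$ admits a pp tree and gives a solution for $M_e$.

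I expect the main obstacle to be the inductive identification in the (only if) direction of the component $\mathcal{C}(c)$ at realization time with the corresponding subtree of $T$: the components are reshaped at every step by black-edge deletion, red-edge addition, and free-character removal, so keeping the chosen order $r$ synchronized with this invariant — and in particular verifying that rule (c) never deletes a character still needed to separate two species — is delicate. The converse refinement of Remark \ref{remark-sigma} needed for the (if) direction is conceptually routine but requires a careful case analysis over the surviving edge colors and over which of the two characters (or its negation) realizes the conflicting rows.
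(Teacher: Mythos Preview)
Your (if) direction is essentially the paper's own argument: it, too, shows that the canonical completion $M'_r$ determined by a successful reduction has no forbidden submatrix, by a short case analysis on whether the two offending columns are both positive, both negated, or mixed. So that half is fine.

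The (only if) direction, however, has a real gap. Your plan hinges on the inductive claim that, in the order $r$ read off an arbitrary pp tree $T$, the connected component $\mathcal{C}(c)$ at the moment $c$ is realized coincides with the set of leaves of the subtree below the gain-edge of $c$, and hence that the canonical completion $M'_r$ equals the tree-completion $M_T$. This identification is false in general. Take $M$ with rows $s_1=(1,0,0)$, $s_2=(1,1,0)$, $s_3=(0,0,1)$ over characters $a,b,c$, and the pp tree for $M_e$ in which the root gains $a$, then $b$, then loses $a$, then loses $b$, then gains $c$, with $s_1,s_2,s_3$ hanging off at the obvious depths. When you realize $a$ first, its component in $G_{RB}$ contains only $s_1,s_2$ (since $s_3$ is linked only to $c$), while the subtree below the gain-edge of $a$ contains all three species. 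So $M'_r$ assigns $(0,0)$ to $s_3$ in columns $(a,\bar a)$, whereas $M_T$ assigns $(1,1)$; the two completions differ, and your proposed induction cannot go through.

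The paper's remedy is precisely what you identified as the ``main obstacle'' but did not resolve: before extracting $r$, it first transforms $T$ into a \emph{standard form} (Lemma~\ref{tec1}) by repeatedly relocating subtrees rooted below a loss-edge $\bar c$ whose species have become disconnected from $c$ in the current red-black graph. The standard-form property is exactly the statement that, at every node $u$, the subtree below an edge labeled $c$ and the component $\mathcal{C}(c)$ in $G_u$ agree on species; only after this normalization does the canonical completion coincide with the tree-completion (Lemma~\ref{prelim-annotated} and Corollary~\ref{cor-annotated}), and the successful reduction follows. Without this tree surgery, your depth ordering can still yield a successful reduction (as it does in the example above), but not for the reason you give, and proving it directly would require a different invariant than equality with $M_T$.
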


\subsection{Building   a successful reduction from a pp tree}

\label{prova}

This section is devoted to the  proof of  one direction of  Theorem  \ref{main-reduction}, that is showing  how to get a successful reduction from a pp tree.    We  will use the following remark that holds for extended matrices.


In this section, given a node $v$ of tree $T$, by $T(v)$ we denote the subtree of $T$ having root $v$. Moreover, we assume that edges of a tree $T$ are oriented. The orientation of edges is  from the root to a leaf node.


\begin{remark}
\label{annotated0}
Let $T$ be a pp tree for a completion of an extended matrix $M_e$. Then by definition of a perfect phylogeny is immediate to verify that the root  of tree $T$     is  a $0^{2m}$ binary vector of size $2m$, moreover each internal  node  $x$ is labeled by a $2m$-vector $l_x$ defined as follows: each entry $i$ has value $1$ if and only if the  corresponding $i$th  character (positive or negated) occurs along the path from the root of the tree $T$ to node $x$.

\end{remark}

\begin{example}
\label{ex2}

Figure \ref{treea} (a)  illustrates  the pp tree for matrix $M'$ of Figure \ref{me} (c). Notice that rows of matrix $M'$ are numbers $1,2,3,4,5$, while the positive characters are $a,b,c,d,e$. Then Figure \ref{treea} (b) illustrates the vectors  labeling each internal node  of  tree $T$.

\end{example}

\begin{figure}
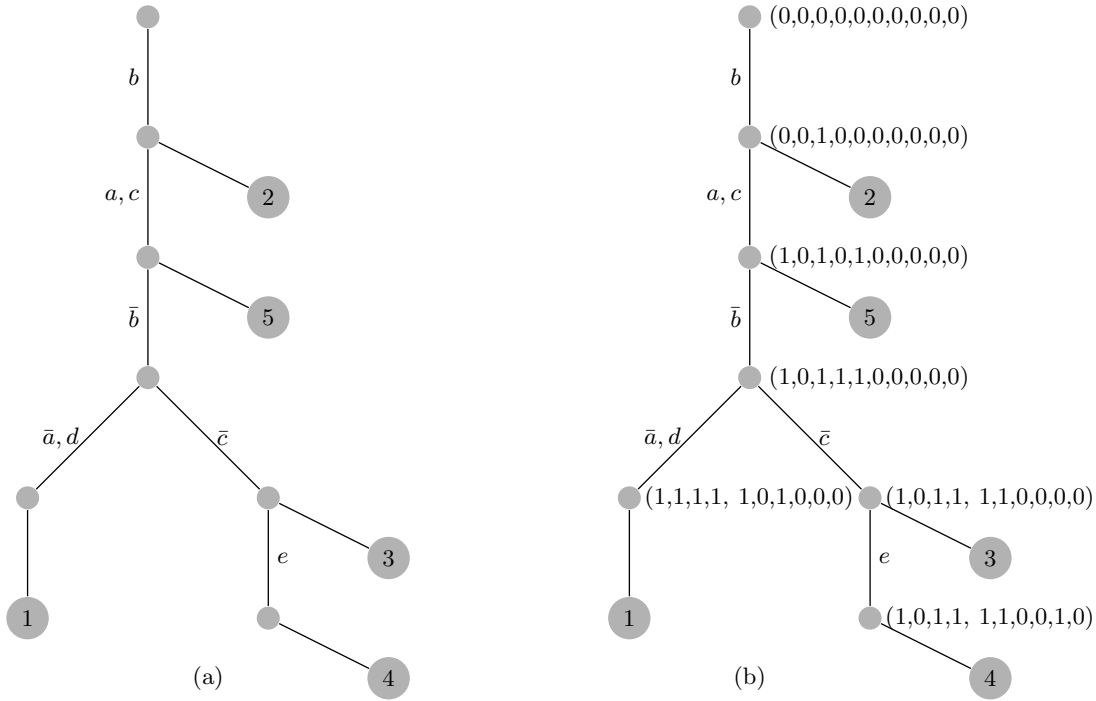


\begin{center}
\treea
\end{center}
\caption{The Figure (a) illustrates a perfect persistent phylogeny $T$ for the matrix of Example \ref{ex2}. Figure (b) reports the vector $l_x$ for each node $x$ of the tree $T$.}
\label{treea}
\end{figure}

Then  we need to state some technical lemmas and introduce a normal form for a pp tree, called {\em standard}. A tree  $T$ is in {\em standard form} when it is in {\em simple form} as defined below and satisfies the properties stated in Definition \ref{standard}.

A pp tree $T$  is in  {\em simple form } if and only if  each edge of the tree is exactly labeled by one character and $T$ does not contain  two  edges $e, e'$ incident to the same node,  one labeled by character $c$ and the other labeled by $\bar{c}$, respectively.  

Given $T$ a pp tree for a completion $M$, we can show that tree $T$ can be reduced to one in simple form. This fact implies that we can obtain from $M$   a  completion $M'$  that admits a tree in simple form. Then the completion  $M'$ is called {\em simple completion}.

\begin{lemma}
\label{tec0}
If $M_e$ has a completion that  admits a pp tree, then  there exists a simple completion $M'$ of $M_e$.  
\end{lemma}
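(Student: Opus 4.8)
The plan is to start from the hypothesis, fix a completion $M'$ of $M_e$ that admits a pp tree $T$, and transform the pair $(M',T)$ by local surgery into a completion together with a tree in simple form, arguing by induction on the number of \emph{persistent} characters. Concretely, I would introduce a potential $\Phi(M')$ equal to the number of positive characters $c$ whose negated column $\bar c$ contains at least one $1$ (equivalently, the number of characters that label a negated edge in $T$). This is a nonnegative integer bounded by $m$, it is left unchanged by any manipulation of $T$ that does not alter the leaf labels, and it will strictly decrease at every genuine modification, which gives termination.

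First I would put $T$ into the one-character-per-edge form (condition~1 of simple form) without touching the completion: subdivide every edge carrying several labels into a path with one label per new edge, and contract every unlabeled edge, i.e. an edge whose two endpoints carry the same $2m$-vector. Neither operation changes the leaf vectors, so $M'$ and $\Phi(M')$ are preserved. Here I would invoke the standing no-zero-column assumption to note that no edge can be labeled by both $c$ and $\bar c$: such an edge would force every species carrying $c$ to also carry $\bar c$ (pattern $(1,1)$), making column $c$ of $M$ all zeros. Hence subdivision never places $c$ immediately before $\bar c$ and cannot by itself create a violation of condition~2.

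Next I would analyze condition~2. Since a completion of $M_e$ never produces the pattern $(0,1)$ in columns $(c,\bar c)$, the $\bar c$-edge lies below the $c$-edge on every root-to-leaf path; therefore the only possible violation at a node $x$ is that the edge entering $x$ is labeled $c$ and some outgoing edge $(x,y)$ is labeled $\bar c$. If no such pair exists, $T$ is already in simple form and $M'$ is the desired simple completion. Otherwise I perform the surgery that erases this immediate gain-and-loss. Because $c$ labels the unique edge entering $x$ and $\bar c$ labels the unique edge $(x,y)$, every species with pattern $(1,1)$ in $(c,\bar c)$ lies in $T(y)$, and in fact all of $T(y)$ has this pattern; I recomplete each of them to $(0,0)$, which is legitimate since each had $M[s,c]=0$, then detach $T(y)$ and reattach it as a child of the parent $p$ of $x$, keeping the $c$-label on the edge $(p,x)$ and deleting the $\bar c$-edge. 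A short check (using that after condition~1 the edges $(p,x)$ and $(x,y)$ carry only $c$ and only $\bar c$) shows that $y$'s new label equals $p$'s, so the reattachment is consistent; the resulting tree labels $c$ by a single edge, labels $\bar c$ by none, and realizes the new completion $M''$ at its leaves, hence is a pp tree for $M''$. As all persistence of $c$ has been removed, $\Phi(M'')=\Phi(M')-1$.

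The argument then closes by strong induction on $\Phi(M')$ over completions of $M_e$ admitting a pp tree. In the base case $\Phi(M')=0$ there are no negated labels, so condition~2 holds vacuously and the condition~1 cleanup already yields a tree in simple form. In the inductive step, after the cleanup either no condition~2 violation remains (and we are done), or the surgery produces a completion $M''$ of $M_e$ admitting a pp tree with strictly smaller potential, to which the induction hypothesis applies. I expect the main obstacle to be the bookkeeping around the interaction of surgery and cleanup: verifying that relocating $T(y)$ yields a genuine perfect phylogeny for $M''$, and that contracting the newly unlabeled edge never increases $\Phi$ (it cannot, since contraction leaves the completion fixed) but may at worst expose a further condition~2 violation, which is handled by the same surgery and absorbed by the induction on $\Phi$.
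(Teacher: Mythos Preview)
Your approach is essentially the paper's: normalize to one label per edge by subdividing and contracting, then eliminate each adjacent $c/\bar c$ pair by re-parenting the subtree below the $\bar c$-edge to the grandparent while switching the affected $(1,1)$ entries to $(0,0)$. Your potential $\Phi$ supplies an explicit termination argument where the paper merely says ``iterate''; one small quibble is that subdivision \emph{can} expose a condition-2 violation when $c$ and $\bar c$ sit on adjacent multi-labeled edges and the chosen linearization makes them consecutive, but this is harmless since your induction on $\Phi$ absorbs any such violation regardless of its origin.
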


\begin{proof}
Let $T$ be the pp tree  for a completion of $M_e$ and 
assume that $T$ is not in simple form.  We first transform the tree $T$ into a tree $T'$ such that each edge has only one label. Tree $T'$ is   obtained by replacing each edge   with $k >1 $ labels with a path of $k$ edges each one labeled with a distinct label of the replaced edge $e$. On the contrary, an edge without labels is contracted, in the sense that the two end nodes of the edge are collapsed to a unique node. Clearly, the above two operations do not change the completion  $M$.

Assume now that there exists two edges $e, e'$ in  tree $T'$ that are incident to the same node and are labeled by  characters $c$ and $\bar{c}$, respectively. Assume that $e = (x,v)$ and $e'=(v, u)$.  

Then we can move tree $T(u)$ as a subtree of  node $x$ by removing edge $(v,u)$  and making the root $u$ adjacent with node $x$. Then change the completion $M$ by replacing all pairs $(1,1)$ induced by the columns $c$ and $\bar{c}$  and species row in subtree $T(u)$,  by the pair $(0,0)$, obtaining the completion $M'$. It is easy to verify that $T'$ is the tree representation of the completion $M'$.
We can iterate the above operation and obtain a tree in standard form, as required by the lemma.
\qed

\end{proof}

Let $G_{RB}$ be the red-black graph for an extended matrix  $M_e$ and let  $T$ be   the pp tree of a completion of $M_e$.

We require that  a  tree $T$ that is in simple form  satisfies  an additional property  that relates the tree $T$ to the red-black graph  $G_{RB}$.  Observe that we associate to each node $u$ of $T$ the red-black graph that is obtained by the realization of the positive characters that have value $1$ in vector $l_u$ where if  a negated character $c$ is free, then its incident red edges are removed from the graph  if and only if  $c$  has value $1$ in vector $l_u$. We define such a graph, denoted as $G_u$, the {\em red-black graph for node $u$} of $T$.


\begin{definition} [standard property]
\label{standard}
Let $T$ be a pp tree  in simple form for a completion $M_e$. Then $T$ is in standard form if and only if the following property holds:
for each node   $u$ of $T$ such that $(u,x)$ is an edge of the tree $T$ labeled $c$  or $\bar{c}$,    all species of tree $T(x)$ are the same species that are in the  connected component of  graph $G_u$ containing node $c$.

\end{definition}

\begin{lemma}
\label{tec1}
Let  $M_e$ be an extended matrix admitting a pp tree. Then, matrix $M_e$ admits a completion that is represented by a tree $T$ in standard form.
\end{lemma}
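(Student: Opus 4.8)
The plan is to start from the simple-form tree guaranteed by Lemma \ref{tec0} and then massage both the tree and its completion, processing the edges of the tree in top-down order, until the standard property of Definition \ref{standard} holds at every edge. So I would let $T$ be a pp tree in simple form for a completion $M'$ of $M_e$, fix an edge $(u,x)$ labelled by a character $c$ or $\bar c$, and aim to show that the species of $T(x)$ coincide with the species lying in the connected component of $G_u$ that contains the node $c$.

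First I would record the basic correspondence between the tree and the red-black graph, which rests on the perfect-phylogeny property. Since $T$ is a perfect phylogeny for $M'$, any two columns that both carry a $1$ in some common species must label edges lying on a common root-to-leaf path, hence are nested in $T$; otherwise those two columns with three species would realize the forbidden matrix. Consequently each connected component of $G_u$ induces a laminar family of edges of $T$. The neighbours of $c$ in $G_u$ split into its black neighbours, which are exactly the species $s$ with $M[s,c]=1$ and are the relevant ones for a $c$-labelled edge, and its red neighbours, which are the persistent species and are the relevant ones for a $\bar c$-labelled edge. In either case these neighbours already sit inside $T(x)$, so the species of $T(x)$ and the species of $c$'s component in $G_u$ can differ only through species of persistent type, namely those whose pair of entries in columns $c,\bar c$ has been completed to $(1,1)$.

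The main step is the surgery that removes such a discrepancy. If a species $s$ lies in the component of $c$ in $G_u$ but not in $T(x)$, then $s$ is joined to $c$ only through red edges produced by already-realized ancestor characters; I would recomplete the pair $(?,?)$ of $s$ in columns $c,\bar c$ to the canonical value $(1,1)$ and relocate its subtree into $T(x)$ below a $\bar c$-edge. Symmetrically, if $s\in T(x)$ is missing from the component, its pair in $c,\bar c$ is recompleted to $(0,0)$ and its subtree detached from $T(x)$. In each case I would verify that the modified matrix is still completed only with $(0,0)$ or $(1,1)$ on the $?$-pairs, that the tree stays a valid perfect phylogeny in simple form (no new forbidden matrix arises, using the nesting property together with Remark \ref{remark-sigma}), and that the vectors $l_v$ are updated consistently. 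Because the edges are treated strictly from the root downwards, a correction at $(u,x)$ touches only $T(x)$ and leaves every already-standardised ancestor edge untouched, so the procedure terminates with a tree in standard form.

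The hard part will be controlling the surgery for the persistent species, since recompleting the pair of a character $c$ for one species changes the red edges incident to $c$ and can in principle alter the components seen at descendant edges. I therefore expect the decisive argument to be that choosing the \emph{canonical} completion prescribed by the component ${\cal C}(c)$ keeps the completion perfect-phylogenetic while making the standard property self-consistent across all characters at once. Concretely, I would show that the canonical completion associated with the top-down order in which gains of positive characters occur in $T$ is exactly the completion produced by the realization process, so that $G_u$ literally tracks the state of the reduction at node $u$; with this identification the standard property becomes a bookkeeping identity rather than a condition that can fail, which is what lets the top-down correction procedure converge.
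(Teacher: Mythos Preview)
Your overall scaffold---start from a simple-form tree via Lemma~\ref{tec0} and fix violations of Definition~\ref{standard} top-down---matches the paper. But your symmetric treatment of the two mismatch directions contains a real gap.

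You propose, when a species $s$ lies in the component ${\cal C}(c)$ of $G_u$ but not in $T(x)$, to ``recomplete the pair $(?,?)$ of $s$ in columns $c,\bar c$ to $(1,1)$ and relocate its subtree into $T(x)$ below a $\bar c$-edge.'' This surgery is not sound. Moving the leaf $s$ into $T(x)$ forces $s$ to acquire every positive character on the root-to-$x$ path that is not negated along it, and to lose every character negated along it; these are columns \emph{other than} $c,\bar c$, and touching only the pair $(c,\bar c)$ cannot make $s$'s row compatible with its new position. There is also no reason a $\bar c$-edge exists in $T(x)$ at all. The paper sidesteps this entirely by proving that this direction of the mismatch is \emph{impossible}: if $u$ is the topmost violator, one argues that any $s'\in{\cal C}(c)\setminus T(x)$ would have to be linked to a species of $T(x)$ through a character $c_1$ already realized along the root-to-$u$ path, and the standard property at the ancestors then forces $c_1$ to be free in $G_u$, contradicting the assumed link. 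So only the other direction (species in $T(x)$ but not in ${\cal C}(c)$) needs surgery, and there the paper identifies a specific subtree $T'_y$ (bounded below by the $\bar c$-edge) that can be detached and reattached at $u$, with a careful case analysis showing no character of the intermediate path obstructs the move.

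Your final paragraph, which proposes to resolve the difficulty by simply adopting the canonical completion and declaring the standard property a ``bookkeeping identity,'' is circular here: the very purpose of Lemma~\ref{tec1} in the paper's architecture is to supply a standard-form tree from which one later deduces (Lemma~\ref{prelim-annotated} and Corollary~\ref{cor-annotated}) that the completion is canonical and that the realization process succeeds. You cannot assume in this lemma that the canonical completion already admits a pp tree.
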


\begin{proof}
By Lemma  \ref{tec0} we can assume that the matrix $M_e$ has a simple completion $M_c$ 
that is represented by a tree $T$ in simple form.
To prove the existence of a tree  in standard form, we iterate the application of the following procedure to $T$ till it is in standard form.  Each iteration of the procedure corresponds to changes to the completion $M_c$,
such that the  final computed completion $M'$  is represented by the new tree in standard form. 
Let $u$ be the node that is closest to the root $r$ of $T$, such that $u$ does not satisfy  the property stated in Definition \ref{standard}.  Eventually, $u$ may be the root of $T$.
 Let ${\cal C }(c)$ be the connected component of  the red-black graph  $G_u$ having node $c$.  Since property of   Definition \ref{standard} is violated then the following two cases are possible.
Case 1: there exists a species $s' $ in component ${\cal C }(c)$ and $s'$ is not in subtree $T(x)$.
Case 2:  there exists a set $S'$ of species in the subtree $T(x)$ that are not in ${\cal C }(c)$.

In the following we show that Case 1 leads to a contradiction, while in Case 2 we built  from tree $T$  a new pp tree $T'$ for a completion of matrix $M_e$  where node $u$ does not verify Case 2 and thus $u$ must satisfy Definition \ref{standard}.

 {\em Case 1.}
 
Assume that there exists a species in component ${\cal C }(c)$ that is  not in subtree $T(x)$.
In the following we show that we obtain a contradiction.  If Case 1 holds, then there must exist a species $s'$  that is not in tree $T(x)$  and  is connected in component ${\cal C }(c)$ (by  red or black edges) to a species $s$ of tree $T(x)$  by means of character $c_1$. More precisely, in component ${\cal C }(c)$ $s$ is connected to characters  $c$ and $c_1$, while $s'$ is connected to character $c_1$.  Since $s'$ is not in tree $T(x)$, species $s$ and $s'$ must have a common ancestor that is a node $v$ along the path $\pi_{ru}$ from the root $r$ to node $u$.    
Then  character $c_1$ labels an edge that occur on path $\pi_{ru}$ and thus $v$ occurs before node $u$.  Consequently,   $c_1$ is realized in graph $G_u$, thus implying that $c_1$ is connected  only by red edges to species $s$ and $s'$. This fact implies that both $s, s'$  do not have character $c_1$ and thus it holds that  $\bar{c_1}$ labels an edge that occurs along the path from the root of $T$ to the common ancestor node $v$ of $s$ and $s'$. 
 Since Definition \ref{standard} holds for each node that precedes $u$,  it follows that all species in $T(u)$ are in a connected component ${\cal C}(c')$ that contains $c_1$, where $c_1$ is connected to all species in $T(u)$ by red edges. Consequently,  $c_1$ is free in $G_u$ and thus all red edges connecting $s$ to $s'$ are removed from the red-black graph $G_u$, by definition of red-black graph associated to a node, a contradiction
with the previous assumption. It follows that  species $s'$ cannot be in component ${\cal C }(c)$, thus implying that no other species different from the ones in  $T(x)$ can be in component ${\cal C }(c)$.
 
 {\em Case 2.}
 
 Assume that $S'$ is the largest set of species that are in $T(x)$ and are not in the component 
${\cal C }(c)$. 
By definition of pp tree, it must be that the set of species $S'$ is in the subtree $T(v)$ for a node $v$ of degree at least $2$ that is along the path from node $x$ to a leaf and is the end node of  the edge $(z,v)$  labeled $\bar{c}$. In fact, since $S'$ contains species that are not in  ${\cal C}(c)$, by definition of graph $G_{RB}$ it means that such species do not have character $c$ and consequently they must occur in the subtree  induced by the end node of the edge labeled $\bar{c}$.

Now, consider the path $\pi_{xv}$ from node $x$ to the node $v$. Let $y$ be the node on path $\pi_{xv}$ such that given the unique path $\pi_{yv}$ from node $y$ to $v$, it consists of only degree $2$ nodes. If such $y$ does not exist, then pose $y=z$.

Let $T_y$ be  the subtree  of $T$ consisting of path $\pi_{yv}$ and subtree $T(v)$. Clearly, all species $s'$ of $T_y$ are the ones of subtree $T(v)$, by construction of $T_y$.

Consider subtree $T'_y$ which is obtained from subtree $T_y$ after removing the edge labeled ${\bar c}$ (see Figure \ref{lemfi}).
In the following we show that subtree $T'_y$ can be moved as a subtree of node $u$  thus obtaining a new tree $T'$ such that is a pp tree for a completion of matrix $M_e$. By construction,  in tree $T'$   subtree $T(x)$ does not contain the set $S'$ of species that  are not in component ${\cal C} (c)$ and thus Case 2 does not hold for node $u$ in tree $T'$.
 Moreover, by application of the constructive procedure given in the proof of  Lemma \ref{tec0}, tree $T'$ can be reduced to a simple form, thus proving what required.

Assume to the contrary that tree $T'_y$   cannot be moved as subtree of node $u$ to obtain a pp tree for a completion of matrix $M_e$. Observe that a species $s$ in subtree $T_y$  of $T$ has all positive characters of the path  from the root $r$ to node $x$  that are not negated along the path $\pi_{xy}$.  Moreover $s$ has positive characters that occur in $\pi_{xy}$. Thus, tree $T'_y$  cannot be moved as subtree of node $u$ in $T'$,  if   two cases hold: (i)  tree $T'_y$ has a species $s$ containing a  positive character  $e$  that belongs to path ${\pi}_{xy}$ or (ii) $s$  does not have a character $e$ that occurs  as negated in path ${\pi}_{xy}$ and occurs as positive in the path from the root $r$ to node $x$.
Let us consider case (i). Thus assume first that tree $T_y$ has a species $s$ having character $e$  that is on the path ${\pi}_{xy}$.  Since $y$ has degree bigger than $2$, there exists a species $s'$ such that $e$ is a character of $s'$ and $s'$ occurs at the end of  a path leaving  node $y$ that is distinct from the path $\pi_{yv}$ from node $y$ to node $v$.  Moreover, $c$ must be a character in $s'$, since the character $\bar{c}$ occurs after node $y$ along path $\pi_{yv}$.
Consequently, in the red-black graph $G_{RB}$ species $s'$ is connected to characters $c$ and $e$, where $e$ is connected to character $s$. It follows that $s$ is connected to $c$ by means of character $e$.
This situation of the red-black graph $G_{RB}$ is also present in the graph 
$G_u$, as both $e, c$ are not   realized in graph $G_u$, being $c, e$ labels of edges that occur after node $u$.  Consequently,  $s$ is in the   connected component ${\cal C}(c)$, thus obtaining  a contradiction with the fact that $s \in S'$.
Let us consider case (ii). 
Thus assume now that tree $T_y$ has a species $s$ that does not have character $e$, where $e$ occurs as positive on the path from the root $r$ to node $x$, while  character $\bar{e}$   occurs on the path ${\pi}_{xy}$. 
Similarly as above, since $y$ has degree bigger than $2$, there exists a species $s'$ occurring at the end of a path leaving node $y$  that is distinct from path $\pi_{yv}$ and $s'$ contains characters $\bar{e}$ and character $c$. Clearly, $e$ labels an edge $(l_1, v)$ that occurs  before node $u$ and since the standard property holds for each node above $u$,  it follows that $s'$ and $s$ are in the same connected component of graph $G_{l_1}$    having character $e$ and character $c$. Thus, when character $e$ is realized in graph $G_{l_1}$, both species $s, s'$ are connected to $e$ by  red-edges, thus implying that $s$ is in the same connected component of  node $c$.  Observe that  this property holds also for graph $G_u$. In fact, $\bar{e}$ occurs after node $u$, and thus red-edges incident to node $e$  cannot be removed from   graph $G_u$.  Consequently, $s$ is in the connected component ${\cal C}(c)$, thus    contradicting the assumption that $s \in S'$. Since case (i) and (ii) are not possible, it follows that all species in subtree $T'_y$ do not have (positive or negated) characters of path  ${\pi}_{xy}$ in the completion of matrix $M_e$. It follows that tree $T'_y$ can be moved as a subtree of node $u$ to obtain tree $T'$ where $T'$ is a pp tree for the completion $M'$  of matrix $M_e$ obtained by replacing in all rows $S'$ of the completion $M_c$ the   entry  $(1,1)$ in columns $c$ and $\bar{c}$ by the pair $(0,0)$.  In fact, species in $S'$ will not have  the character $c$ and its negated character in the new tree $T'$ and thus $T'$ is the pp tree for the new completion $M'$. This observation completes the proof of Case 2.  \qed

\end{proof}

\begin{figure}[ptb]
\begin{center}
\includegraphics[width=0.55\textwidth]{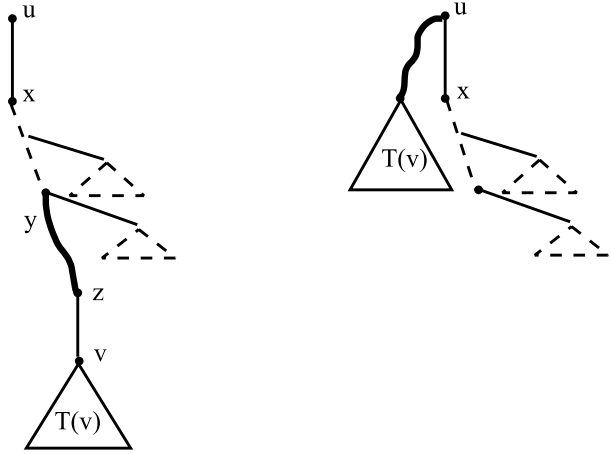}
\end{center}
\caption{The figure illustrates Case 2 of  Lemma \ref{tec1}, more precisely it represents the operation of  moving tree $T'_y$  as subtree of node $u$. Observe that tree $T'_y$ consists of subtree $T(v)$  and the path $\pi_{yv}$ where edge $(z,v)$ labeled $\bar{c}$ has been removed.}
\label{lemfi}
\end{figure}

We show the first preliminary lemma that allows us to prove the main Theorem of the paper.

\begin{lemma}
\label{prelim-annotated}
Let $T$ be a tree in standard form representing the completion $M'$ of an extended matrix $M_e$. Let $G_x$ be the red-black graph for a node $x$ of $T$.
 Then, given $C'$ the set of positive characters having value $1$ in vector $l_x$, the completion of $C'$ in $M'$ is canonical, i.e. it is given by the realization of characters $C'$ in graph   $G_{RB}$.   Moreover,  a negated character  $c$ changes value from $0$ to $1$ in $l_x$ if and only if it is free in graph $G_x$. Then edges incident to  $c$ are removed from $G_x$.

\end{lemma}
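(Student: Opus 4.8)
The plan is to argue by induction on the depth of $x$, i.e.\ on the number of edges of the path $\pi$ from the root of $T$ to $x$, processing $\pi$ one edge at a time and keeping $G_x$ in step with the tree. By Remark~\ref{annotated0}, $l_x$ can be read directly off $\pi$: $C'$ is the set of positive characters labelling an edge of $\pi$, and a negated character $\bar c$ is active in $l_x$ precisely when $\bar c$ labels an edge of $\pi$. In the base case $x$ is the root, $l_x=0^{2m}$, $C'=\emptyset$, and $G_x=G_{RB}$ has no red edges, so nothing is completed and no character is free; all assertions hold. Throughout the induction I would maintain the dictionary that underlies everything: a red edge $(c,s)$ of $G_x$ records that $c$ is \emph{persistent} in $s$ (gained and then lost on the root-to-$s$ path) and that $\bar c$ has not yet been processed, while the component of an active character in $G_x$ agrees, via the standard property (Definition~\ref{standard}), with the set of species of the corresponding subtree.

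For the inductive step, let $(u,x)$ be the last edge of $\pi$. If it is labelled by a positive character $c$, then $G_x$ is the realization of $c$ in $G_u$ and $C'$ gains $c$. Here the standard property applied to $(u,x)$ identifies the component $\mathcal{C}(c)$ of $c$ in $G_u$ with the species of $T(x)$; among these, the species with $M[s,c]=0$ are exactly those in which $c$ is gained at $(u,x)$ but absent at the leaf, i.e.\ the persistent ones, and these are precisely the species that receive the new red edges. This is exactly the $(1,1)$ block prescribed by the canonical completion of $c$, species with $M[s,c]=1$ keep $(1,0)$ (their black edge to $c$ being deleted), and species outside $\mathcal{C}(c)$ receive $(0,0)$; by Remark~\ref{annotated0} this coincides with $M'$, so the completion of $C'$ stays canonical. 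If instead $(u,x)$ is labelled $\bar c$, then $C'$ is unchanged and $\bar c$ becomes active; the standard property now gives that the component of $c$ in $G_u$ is the set of species of $T(x)$, every one of which has $c$ lost (because the single edge $(u,x)$ labelled $\bar c$ lies above it), so by the dictionary $c$ is red-connected to all of them and is therefore \emph{free}. Deleting its edges produces $G_x$ and yields the forward direction of the \emph{moreover} clause.

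The step I expect to be the main obstacle is the converse: that the realization of a positive character never frees a character whose negation is still inactive, so that a character is free in $G_x$ exactly when its negation has been reached, and the edge-deletion rule built into the definition of $G_x$ never acts prematurely nor omits a genuinely free character. The argument I would give is by contradiction: if realizing a positive character made some active $c'$ red-connected to every species of its component in $G_x$, then by the dictionary all those species would be persistent in $c'$, i.e.\ $c'$ would be lost before each of their leaves; since $T$ is a perfect phylogeny for $M'$, the negated character $\bar{c'}$ labels \emph{at most one} edge of $T$, and a single loss edge lying above all those leaves is forced by the standard property to lie on $\pi$ itself, making $\bar{c'}$ already active --- a contradiction. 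This is the one place that genuinely uses both the uniqueness of the edge labelled by each positive or negated character in a pp tree and the component/subtree identification of standard form; it is also exactly the configuration that a non-standard tree could violate, which is why Lemma~\ref{tec1} is invoked beforehand. Once this is established the dictionary propagates through both the red-edge additions and the edge deletions, the invariant reaches $x$, and reading it off gives the two statements of the lemma: the completion of $C'$ is canonical, and $c$ is free in $G_x$ with its incident edges removed exactly when $\bar c$ is active in $l_x$.
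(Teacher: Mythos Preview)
Your proposal is essentially the same approach as the paper's: induction along the root-to-$x$ path (the paper phrases it as induction on the number of ones in $l_x$, which in simple form is the same as the depth), splitting into the two cases according to whether the last edge $(u,x)$ carries a positive or a negated label, and using the standard property (Definition~\ref{standard}) to identify the connected component of the relevant character in $G_u$ with the species set of $T(x)$. Your base case $k=0$ is marginally cleaner than the paper's $k=1$, but the content is identical.

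The one place where you and the paper diverge is the argument for the converse in the ``moreover'' clause. You argue by contradiction that if $c'$ became free while $\bar{c'}$ is still inactive, then all species in its component are persistent in $c'$, hence lie below the unique $\bar{c'}$-edge, and you then claim the standard property forces that edge onto $\pi$. The paper instead stays at a node $v$ above the $\bar{c}$-edge and argues directly that some species $s$ with $M[s,c]=1$ lies in the component of $c$ in $G_v$; since such an $s$ has had its black edge to $c$ removed without acquiring a red one, $c$ cannot be free. Your route and the paper's are two sides of the same coin, but your step ``forced by the standard property to lie on $\pi$'' is doing real work and would need to be unpacked: one has to trace how the components evolve along $\pi$ between the $c'$-edge and $x$, and check that the $T(q)$ species cannot detach from the rest of $T(b)$ before the branch carrying $\bar{c'}$ is taken. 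The paper's version sidesteps this by exhibiting the obstructing species directly, which is a little more concrete; your version is conceptually cleaner but leaves more to verify. Either way, both arguments lean on exactly the ingredients you name---uniqueness of the $\bar{c'}$-edge and the component/subtree identification from standard form---so your diagnosis of where the difficulty lies is accurate.
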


\begin{proof}
Let $G_{RB}$ the red-black graph for the extended matrix $M_e$.
We show the lemma by induction on the number $k$ of ones occurring in a node $l_x$ of tree $T$. 
Assume first that $k =1$ and $c$ is the character that has value $1$ in vector $l_x$. Clearly, $c$ is the  label of the edge $(r,x)$ where $r$ is the root of tree $T$ and $c$ is a positive character. Observe that the completion of columns $c$ and $\bar{c}$ in $M'$ is a canonical completion, i.e.  it is obtained by the realization of node $c$ in graph $G_{RB}$.   In fact,  since by  Definition \ref{standard} of a standard tree   the  species of tree $T(x)$  are  the same ones of the  connected component of graph \grb  having  character $c$, it holds that they must have value $(1,1)$  or $(1,0)$ in  the pair of columns $(c, \bar{c})$ of matrix $M'$. 

Now, assume that the number of entries with value $1$ in  vector $l_x$ is $k >1$ and the edge $(u,x)$  is incident to node $x$ (assuming that edges are oriented following every path from the root $r$) .
Given vector $l_u$, since the number of  entries that are $1$ in $l_u$ is less than the number of  entries that are $1$ in $l_x$,  by induction, the completion of all characters $C'$  that are in $l_u$ is given by the realization of the set $C'$ in graph $G_{RB}$, thus obtaining  the red-black graph $G_u$  for node $u$.   
Since $T$ is in a standard form,   two cases are possible: (1) edge $(u,x)$ is labeled by a positive character $c$ or   (2) a negated character $\bar{c}$. 

Let us consider case (1) first.
Clearly, character $c$ is non active in $G_u$ as it has $0$ value in vector $l_u$.
 By Definition  \ref{standard} of a standard tree all species in $T(x)$ are the same ones that are in the  connected component of graph $G_u$ having character $c$.  Notice that    species in $T(x)$  specify    the rows where column $c$ and $\bar{c}$ must have the value $(1,1)$ or $(1,0)$ in the  completion $M'$. Consequently, the completion of $c$ in matrix $M'$ corresponds to the realization of $c$ in the red-black graph $G_u$. Thus the completion in matrix $M'$ of  the set $C' \cup \{c\}$ of  characters in $l_x$ is given by the realization of such characters in graph $G_{RB}$, thus proving that the lemma holds in this case.

Let us consider case (2).
Assume that edge $(u,x)$ is labeled by the negated character $\bar{c}$. Then by induction the completion of all columns for positive characters in $l_u$ is given by their realization in graph $G_{RB}$. Since edge $(u,x)$ is labeled by a negated character, it follows that the red-black graph $G_x$ is obtained by the realization of  positive characters in $l_u$.  Consequently, $G_u$ and $G_x$ are the same graph, thus showing that the completion of  positive characters in $l_x$ is   given by the realization of positive characters in $G_x$. Since $\bar{c}$ has value $1$ in node $l_x$ and value $0$ in node $u$, we must show that  $c$ is  free in graph $G_u$. In fact,  all species in $T(x)$ do not have character $c$, and by Lemma \ref{tec1} these species  are exactly the species that are  in the connected component of node $c$ in graph $G_u$. This fact implies that  character $c$ is free in graph $G_u$. 

Now, let us show that   character $c$ cannot be free   in graph $G_v$, for $v$ a node that is an ancestor of $u$. In fact, since tree $T$ is in standard form, it must be that all species in subtree $T(v)$ are in the same connected component of a character $c_1$ where $c_1$ labels an edge  $(v, z)$. But, since $\bar{c}$ occurs after node $z$, it follows that there exists a species $s$  in subtree $T(v)$ that has the positive  character $c$ and  therefore  no  red edge incident to $s$ a node $c$ is given in graph $G_v$. This fact implies that $c$ is not free in graph $G_v$.

This fact  completes  the proof of the lemma.
\qed


\end{proof}

By  Lemma \ref{prelim-annotated}, the completion of characters in an extended matrix $M_e$ that admits a tree in standard form is a canonical one.

\begin{corollary}
\label{cor-annotated}
Let $T$ be a tree in standard form  representing the completion $M'$ of an extended matrix $M_e$. Then $M'$ is a canonical completion.
\end{corollary}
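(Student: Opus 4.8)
The plan is to promote the per-node statement of Lemma~\ref{prelim-annotated} into a single global realization sequence for the whole matrix. Recall that Lemma~\ref{prelim-annotated} already tells us that at every node $x$ the completion in $M'$ of the positive characters carrying value $1$ in $l_x$ is exactly the one produced by realizing those characters in $G_{RB}$, and that a negated character $\bar c$ switches to $1$ in $l_x$ precisely when $c$ becomes free in $G_x$. What remains is to turn these local facts into the single claim that the \emph{whole} matrix $M'$ is obtained by realizing all $m$ positive characters in one order, which is exactly what ``$M'$ is a canonical completion'' means (cf. Remark~\ref{remark-sigma}).

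First I would fix a realization order. Since $T$ is in simple form, every positive character labels a unique edge of $T$, and the ancestor--descendant relation on these edges is a partial order; let $r=\langle c_{i_1},\dots,c_{i_m}\rangle$ be any linear extension of it (equivalently, the order in which a pre-order traversal of $T$ first meets the labeling edges). The claim to prove is that realizing the characters in the order $r$ reproduces $M'$.

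Next I would establish the claim by induction along $r$. Consider the step realizing $c=c_{i_j}$, whose labeling edge is $(u,x)$. The positive characters already realized include all ancestors of $c$, i.e. exactly the characters carrying value $1$ in $l_u$; by Lemma~\ref{prelim-annotated} the component of $c$ in the current graph coincides with the component ${\cal C}(c)$ of the red-black graph $G_u$ for node $u$. Hence the realization of $c$ fills the column pair $(c,\bar c)$ with $(1,1)$ on exactly the species of ${\cal C}(c)$ and with $(0,0)$ elsewhere; by the standard property of Definition~\ref{standard} these are the species of $T(x)$, and a species outside $T(x)$ has neither $c$ nor $\bar c$ on its root-to-leaf path, so this is precisely the entry pattern of $M'$. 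The freeing of a negated character and the associated deletion of red edges likewise matches the switch of $\bar c$ to $1$ in the tree, again by Lemma~\ref{prelim-annotated}. Running over all $j$ shows that every column pair of $M_e$ is completed by $r$ exactly as in $M'$, so $M'$ is the canonical completion induced by $r$.

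The step I expect to be the only real obstacle is verifying that realizing characters sitting in branches of $T$ \emph{other} than the root-to-$u$ path does not perturb the component ${\cal C}(c)$ used to complete $c$: a priori such realizations could merge or split components that matter for $c$. This is where the standard-form hypothesis earns its keep. Because Definition~\ref{standard} forces the species below each child of a node to be precisely the connected component of the corresponding character in $G_u$, realizations confined to sibling subtrees act only on their own, already-separated components and on negated characters that have already been freed; they therefore leave the component of $c$ in $G_u$ untouched. In this way the global statement collapses onto the local guarantee of Lemma~\ref{prelim-annotated}, and the corollary follows.
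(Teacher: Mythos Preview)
Your proof is correct and follows essentially the same strategy as the paper: both deduce the corollary from Lemma~\ref{prelim-annotated} by an induction over the tree, the paper phrasing it as induction on the number of zeros in $l_x$ (showing that the characters of each subtree $T(x)$ are completed by their realization in $G_x$) rather than along an explicit linear extension. Your version is more detailed, in particular making explicit the global realization order and the non-interference of sibling subtrees, points the paper's two-line proof leaves implicit.
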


\begin{proof}
The result is a direct consequence of the previous Lemma \ref{prelim-annotated}. In fact, given a node $x$ of tree $T$, by induction on the number of $0$ that  are in   vector $l_x$,  it is easy to show by  direct application of Lemma \ref{prelim-annotated} that the completion of all characters in tree $T(x)$ is given by   their realization   in the red-black graph $G_x$ for node $x$. This fact implies that the completion of all characters in tree $T$ is canonical.
\qed
\end{proof}

 In the following we can show that a pp tree $T$  represents a successful reduction of the red-black graph.
 
\begin{lemma}
\label{first-direction}
Let $G_{RB}$ be the red-black graph for an extended matrix  $M_e$. If $M_e$  admits a  pp tree, then there exists a successful reduction of graph $G_{RB}$.
\end{lemma}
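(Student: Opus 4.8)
The plan is to extract the ordering $r$ directly from the structure of a standard-form tree for $M_e$, and then invoke Remark \ref{remark-sigma} to rule out a leftover $\Sigma$-graph. First I would apply Lemma \ref{tec1} to assume, without loss of generality, that the pp tree $T$ for $M_e$ is in standard form and represents a completion $M'$. Since every positive character is gained exactly once in $T$, it labels a unique oriented edge, so the positive edges of $T$ are in bijection with the positive characters. I would then let $r = \langle c_{i_1}, \dots, c_{i_m}\rangle$ be any linear extension of the ancestor--descendant partial order on these edges; that is, whenever the positive edge of $c$ is an ancestor of the positive edge of $c'$, I place $c$ before $c'$ in $r$. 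Such an $r$ clearly orders all $m$ positive characters, as required by Definition \ref{suc-red}.

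The key step is to show that the consecutive realization of $r$ in $G_{RB}$ reproduces exactly the completion $M'$. This is essentially the content of Corollary \ref{cor-annotated}, which already establishes that the completion represented by a standard-form tree is canonical: by Lemma \ref{prelim-annotated}, for each node $x$ the completion of the positive characters having value $1$ in $l_x$ coincides with their realization in $G_{RB}$, producing the red-black graph $G_x$ for node $x$, and a negated character is turned on in $l_x$ exactly when it becomes free in $G_x$. Processing the characters in the order $r$ respects the orientation of $T$, so realizing a prefix of $r$ always yields the red-black graph $G_x$ associated with the corresponding node $x$; in particular, after realizing the entire sequence $r$ the completion obtained is precisely $M'$. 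I expect the main obstacle to lie here: one must verify that the order among characters sitting on incomparable branches of $T$ is immaterial, because the realization of a character only rewrites edges inside its own connected component and, by the standard property of Definition \ref{standard}, distinct branches correspond to already-separated components once their common ancestor characters have been realized. All of this is supplied by the induction in Lemma \ref{prelim-annotated}.

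It then remains only to argue that the final graph $G_r$ is e-empty. By Remark \ref{remark-sigma}, after the realization of all characters in $r$ the graph $G_r$ is either e-empty or contains two nodes inducing a $\Sigma$-graph, and in the latter case the induced completion of $M_e$ contains a forbidden matrix and hence admits no pp tree. But the completion produced by $r$ is exactly $M'$, which by construction admits the pp tree $T$; by the forbidden-matrix characterization recalled in Section \ref{pre1}, $M'$ contains no forbidden submatrix and therefore $G_r$ induces no $\Sigma$-graph. Consequently $G_r$ must be e-empty, so $r$ is a successful reduction of $G_{RB}$ in the sense of Definition \ref{suc-red}, completing the proof.
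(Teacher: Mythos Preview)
Your proposal is correct and follows essentially the same route as the paper: reduce to a standard-form tree via Lemma~\ref{tec1}, use Lemma~\ref{prelim-annotated}/Corollary~\ref{cor-annotated} to identify the realization process with the completion $M'$, and then invoke Remark~\ref{remark-sigma} together with the forbidden-matrix characterization to conclude that the final graph is e-empty.

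The only presentational difference is that the paper argues by induction on the number of $0$-entries in the label $l_x$ of a node, showing that each $G_x$ admits a successful reduction, whereas you construct the ordering $r$ explicitly as a linear extension of the ancestor--descendant order on positive edges. Your version therefore needs the extra observation (which you correctly identify and justify via Definition~\ref{standard} and Case~1 of the proof of Lemma~\ref{tec1}) that once the characters above a branching node $u$ have been realized, the subtrees below $u$ live in distinct connected components of $G_u$, so the relative order of realizations on incomparable branches is immaterial. The paper's induction sidesteps this by never mixing branches; your direct construction is arguably cleaner once that separation is noted. Either way, the same lemmas carry the weight.
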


\begin{proof}

By Lemma \ref{tec1},   there is a  completion $M$ for $M_e$ that  admits  a tree $T$ in standard form.
Let $G_x$ be the red-black graph for  node $x$ of $T$. 
  Then we prove that there exists a successful reduction of $G_x$, by 
 induction on the number $k \geq 0 $ of  $0$ entries that are left in the root vector   of tree $T(x)$.
Assume first that $k =0$, i.e. all entries of the root have value $1$.  This fact implies that all characters have been realized in the red-black graph. By construction, the red-black graph can only have red edges.  By Remark \ref{remark-sigma},  if it is not e-empty it means that it has a $\Sigma$-graph. But, this fact implies  a contradiction with the existence of the tree $T$. In fact, the $\Sigma$-graph represents the existence in $M$ of the induced forbidden matrix. Assume that $a$ and $ b$ are the two characters of the forbidden matrix. Since by Corollary \ref{cor-annotated}, the completion of columns for $a$ and $b$ in $M$ is canonical, i.e. is associated with the realization of $a$ and $b$ in graph $G_{RB}$,  we obtain a contradiction. Thus $G_x$ must reduce to the e-empty graph, i.e. a successful reduction of $G_x$ must exist.

Assume now that the number of entries $0$ in vector $l_x$ is equal to $k$, with $k =m $ and $m >0$. Then the root  $x$ of tree $T(x) $ has an outgoing edge $(x,u)$ that is labeled by a character $c$ which means that the entry of $c$ in vector $l_u$ is $1$, while it is $0$ in $l_x$. Two distinct cases must be considered (1) either $c$ is positive or (2) $c$ is negated.

Case 1.  Assume $c$ is positive.
Since  vector $l_u$ has one zero less than the root of tree $T$, that is $k = m-1$. By  induction  it holds that the red-black graph $G_u$ reduces to the e-empty graph. Since graph $G_x$ differs from graph $G_u$ by the realization of $c$, it follows  that $G_x$ reduces to the e-empty graph by the realization of $c$ and non active characters in $G_u$.

Case 2. If $c$ is negated, by Lemma \ref{prelim-annotated}  the red-black graph $G_x$ for node $x$ is equal to the  red-black graph $G_u$ and $c$ is free in $G_u$.  Since vector $l_u$ has a $0$ entry less than  the number of $0$ entries in  $l_x$, by induction $G_u$ reduces to the e-empty graph and consequently also $G_x$ reduces to the e-empty graph. Thus, both cases prove that there exists a successful reduction of  graph $G_x$.
\qed

\end{proof}

The previous Lemma \ref{first-direction} provides the proof of the {\em Only if} direction of Theorem \ref{main-reduction}.

\subsection{Building a pp tree from a successful reduction}

In this section we complete the proof of Theorem \ref{main-reduction}  by showing that  a successful reduction provides a completion for a matrix $Me$ admitting a pp tree.

\begin{theorem}
Let $M_e$ an extended matrix. If there exists a successful reduction of the graph $G_{RB}$, then $M_e$ admits a perfect phylogeny.
\end{theorem}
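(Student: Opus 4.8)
The plan is to turn the successful reduction into an explicit completion and then appeal to the classical characterization of matrices admitting a perfect phylogeny. Concretely, let $r=\langle c_{i_1},\dots,c_{i_m}\rangle$ be a successful reduction of $G_{RB}$, and let $M'$ be the completion of $M_e$ obtained by performing the canonical completion of $c_{i_1},\dots,c_{i_m}$ in this order; since the reduction realizes every character, $M'$ has no remaining $?$ entries. By Gusfield's characterization (\cite{Gus91}), $M'$ admits a pp tree if and only if no two of its $2m$ columns induce the forbidden configuration $(0,1),(1,0),(1,1)$; equivalently, writing $A_z$ for the set of species having value $1$ in column $z$ of $M'$, it suffices to prove that the family $\{A_z\}$ over all $2m$ columns is laminar (every two members nested or disjoint). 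I would therefore reduce the theorem to establishing this laminarity.

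First I would record the one-sets produced by the realizations. For a positive character $c$, the canonical completion puts $1$ in column $c$ exactly on the species of the connected component $\mathcal{C}(c)$ present when $c$ is realized, so $A_c=\mathcal{C}(c)$; and column $\bar c$ receives $1$ exactly on those species of $\mathcal{C}(c)$ having no black edge to $c$, so $A_{\bar c}=A_c\setminus\{s:M[s,c]=1\}\subseteq A_c$. The key structural observation is a monotonicity lemma: the realization of a character never merges two connected components. Indeed, step (a) only adds red edges from $c$ to species already lying in $\mathcal{C}(c)$, and steps (b) and (c) only delete edges; deleting edges cannot merge components, and adding an edge inside a single component cannot either. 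Hence, as we advance through $r$, the component containing any fixed species can only shrink or split, so the collection of all components arising over the course of the reduction is laminar.

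From this monotonicity the positive--positive case is immediate: if $c'$ is realized after $c$, then either $c'\in\mathcal{C}(c)$, in which case its later (hence smaller) component gives $A_{c'}\subseteq A_c$, or $c'$ lies outside $\mathcal{C}(c)$, in which case its component stays disjoint from $\mathcal{C}(c)$ and $A_{c'}\cap A_c=\emptyset$. This is exactly the absence of a conflict between two positive characters in $M'$, in agreement with Remark \ref{remark-sigma}: a conflict between positive columns is precisely a $\Sigma$-graph, which cannot survive in the e-empty graph $G_r$. Because $A_{\bar c}\subseteq A_c$, the mixed cases are then settled whenever $A_c$ is disjoint from, or contained in, the partner's one-set: if $A_c\cap A_z=\emptyset$ then $A_{\bar c}\cap A_z=\emptyset$, and if $A_c\subseteq A_z$ then $A_{\bar c}\subseteq A_z$.

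The hard part is the genuinely new case, where a negated column $\bar c$ must be compared with a column whose one-set lies strictly inside $A_c=\mathcal{C}(c)$ --- in particular another negated column $\bar{c'}$ with $\mathcal{C}(c')\subsetneq\mathcal{C}(c)$. Here the $\Sigma$-graph argument does not apply directly, since negated characters are not vertices of the red-black graph, so a conflict involving $\bar c$ need not surface as a $\Sigma$-graph. My plan is to prove that each negated one-set $A_{\bar c}$ is itself a union of components arising later in the reduction: the species of $A_{\bar c}$ are exactly those of $\mathcal{C}(c)$ reached from $c$ by red edges, i.e. the species not carrying $c$, and by the free mechanism of step (c) these are precisely the species lying below the edge at which $c$ eventually becomes free. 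Tracking the step $j$ at which $c$ becomes free, I would identify $A_{\bar c}$ with the union of whole components of the intermediate graph $G_j$; since the family of all such components is laminar by the monotonicity lemma, $A_{\bar c}$ is then nested or disjoint with every other one-set. Establishing this identification is the crux of the argument; once it is in place, the full $2m$-column family is laminar and Gusfield's criterion yields a pp tree for $M'$, completing the if-direction of Theorem \ref{main-reduction}. As an equivalent route kept in reserve, I would build the pp tree top-down directly from $r$, hanging an edge labeled $c$ above the subtree spanned by $\mathcal{C}(c)$ and an edge labeled $\bar c$ at the point where $c$ becomes free, and then verify the four defining properties of a p-pp tree using the same component monotonicity.
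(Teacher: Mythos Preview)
Your approach is correct and genuinely different from the paper's. The paper argues by contradiction through a three-way case split on the type of the two conflicting columns (positive/positive, negated/negated, positive/negated), in each case producing either a surviving $\Sigma$-graph in $G_R$ or an impossible value in the canonical completion. Your route is more structural: you isolate the single fact that realizations never merge components, so the species-sets of all components encountered during the reduction form a laminar family, and then you place every column's one-set into that family. This buys a uniform argument with no case analysis, and it makes transparent why the perfect-phylogeny criterion holds.

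One tightening is needed at the step you flagged as the crux. Saying that $A_{\bar c}$ is ``a union of whole components of $G_j$'' is not by itself enough for laminarity: a union of components at step $j$ can overlap nontrivially with a single component at an earlier step. What you actually have, and should state, is that $A_{\bar c}$ is a \emph{single} component. Indeed, the red edges from $c$ go exactly to $A_{\bar c}$ and persist until $c$ is declared free, so all of $A_{\bar c}$ stays in $c$'s component; and the freeness test, applied during the realization of some later character $d$, asserts that $c$ is red-connected to every species of $\mathcal{C}(d)$, forcing the species of $\mathcal{C}(d)$ to equal $A_{\bar c}$. Hence $A_{\bar c}=A_d$ for that positive $d$ (or $A_{\bar c}=\emptyset$). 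With this observation the negated columns add nothing new to the family, and laminarity of the positive one-sets --- which you already established from monotonicity --- immediately gives laminarity of all $2m$ columns. This also dispenses with your reserve plan of building the tree directly.
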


\begin{proof}

Let   $M$ be the completion of matrix \me\ obtained from a successful reduction of the red-black graph for $M_e$.  
In the following we show that $M$ has no forbidden matrix. This fact will prove that $M$ admits a pp tree.
Let $G_R$ be the red-black obtained after the realization of the characters of the successful reduction.
 Assume to the contrary that $M$ has two characters $c, c_1$ that induce a forbidden matrix $F$, and let $s_1, s_2, s_3$ be the species of $M$  having the configuration $(1,1)$, $(1,0)$ and $(0,1)$ in $F$, respectively.
 
We must consider the following cases.

Case 1. Assume that the forbidden matrix is induced by   two negated characters.
This fact implies that $G_R$ will have an induced $\Sigma$-graph, thus contradicting the fact that $G_R$ is e-empty.

Case 2. Assume that the forbidden matrix is induced by two positive characters.

 Then  $c, c_1$ must be in the same connected component of the  red-black graph before their realization, as species  $s_1$ is connected to both characters (we do not know if $s_1$ is connected  by a black or red edge).
 Now, the realization of $c$ produces the red edge $(c,s_3)$,  since $M_e[s_3,c] = ?$. Then $M[s_3,c]=1$ in the completion  $M$, a contradiction with the assumption.
 
 Case 3. Assume that the forbidden matrix is induced by a positive and negated character.

Assume that $c$ is the negated character.  Since $c$ and $c_1$ share a species in the forbidden matrix $F$, it means that $c$ and $c_1$ are in the same connected component of the red-black graph  when $c_1$  and $c$ are realized in the graph. Since $(0,1)$ is given in the matrix $F$ in row $s_2$, by definition of realization of $c_1$, it must be that  $M[s_2, c_1]=1$ and $M[s_2,\bar{c_1}]= 1$ as $M_e[s_2,c_1]=0$. But this  is a contradiction.

\qed
\end{proof}

\begin{figure}[ptb]
\begin{center}
\includegraphics[width=0.6\textwidth]{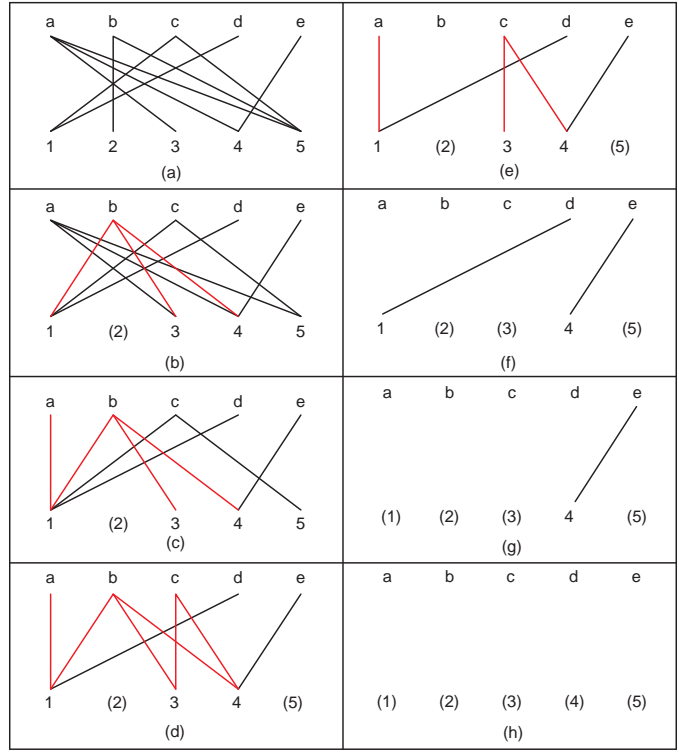}
\end{center}
\caption{The main steps of the successful realization of  graph \grb   described in Example \ref{example-reduce-graph}.}
\label{graph}
\end{figure}

\begin{example}
\label{example-reduce-graph}
Let us consider the $5 \times 5$ input matrix \me shown in Figure \ref{me} (b). In the following we detail the realization of characters of a successful reduction   $r = < b, a, c, d, e>$ of graph \grbd
First of all, observe that  Figure \ref{graph} (a) illustrates the initial red-black graph \grbd In the following we say that a species is realized when it is a singleton in the red-black graph.

First character $b$ is realized (Figure \ref{graph}(b)) and then  
the species $2$ is realized. Then character $a$ is realized (Figure \ref{graph}(c)). 
Note that we do not add any edge incident to species $2$, since it has been already realized. 
Then character $c$ is realized (Figure \ref{graph}(d)). 
and  species $5$ is realized. Moreover, character $b$ is free since it is connected by red edges to all species of the same connected component of $b$ (Figure \ref{graph}(e)).  Since character $a$ is connected to all species of its connected component with  red edges, $a$ is free.
The same fact holds for character $c$. At this point, species $3$ is a singleton, so it is realized (Figure \ref{graph}(f)).
Then character $d$ is realized (Figure \ref{graph}(g))  and species $1$ is realized. 
 Finally,   character $e$ is realized  and  so species $4$ is realized. At this point (Figure \ref{graph}(h)), \grb does not contain any edge.

Notice that the successful reduction provides the completion that is given in Figure \ref{me} (c).

\end{example}

Observe  that a successful reduction of the red-black graph provides the main steps of the process of building  a pp tree. More precisely,  the realization of a single character leads to an operation in the pp tree, which is either adding an edge labeled by a character or adding a leaf node corresponding to a species node.



\section{An exact algorithm for the P-PP problem}
In this section we propose an algorithm for   the P-PP problem that is based on Theorems \ref{equivalence} and \ref{main-reduction}.
The algorithm   reduces an instance $M$ of P-PP to an instance $M_e$ of the  IP-PP problem.  By the proof of Theorem \ref{equivalence} $M_e$  admits a pp tree $T$ if and only if $T$ is a solution of  matrix $M$. 
Then by the characterization given by Theorem \ref{main-reduction},  $M_e$ admits a pp tree $T$ if and only if there exists a successful reduction of the red-black graph for $M_e$.   We design an algorithm, called {\bf  \pp}  that  builds a  decision tree that explores all permutations of  the set $C$ of  characters  of $M_e$ in order to find one that is a successful reduction, if it exists.  More precisely, each edge of the decision tree  represents a character and each path of the tree from the root to a leaf is a distinct permutation of the set $C$.  
The algorithm works in a branch and bound like manner, in the sense that if a branch of the decision tree ending in node $x$ does not lead to a solution, then the decision tree below $x$ is discarded.  More precisely,  each branch ending in node $x$   gives a partial permutation  $\pi$  that consists of all characters labeling the path from root $r$ to  node $x$. A partial completion  $M_{\pi}$ is computed by realizing characters provided by the partial permutation $\pi$. Whenever $M_{\pi}$ contains the forbidden matrix,  then the branch ending in $x$ does  not lead to a solution, and $x$ is labeled as a {\em fail} node.

 
\vspace{.2in}

Below we give a general procedure for the realization of a single character in the red-black graph built during the realization of characters.

{\bf Procedure Realize($c, M',$\grb )}

 {\em Input:} a  character $c$, a partial completion $M'$ and a red-black graph \grb  

 {\em Output:} character $c$ is realized in graph \grb and $c$ is completed in  $M'$

\begin{itemize}

\item Step 1.  Mark character $c$ as {\em active}.
\item Step 2.  Compute the connected component $\cal C$ of graph \grb containing character $c$
\item Step 3.  Realize character $c$:

- add red edges connecting character $c$ to all  species nodes $s'$  in $\cal C$ such that $c$ is not connected to $s'$,

	- remove all black edges $(c, s)$ in $\cal C$,
	
	- update the graph \grb by removing all red-edges outgoing  from a character $c'$ of \grb that is {\em free}.

\item Step 4.  Complete the  columns of characters $c$ and $\bar{c}$ in $M'$   as follows:   in every row $s$ such that $(c,s)$ is a red edge in \grbc,  replace   the  pairs $(?,?)$ by $(1,1)$,  otherwise     by $(0,0)$.


\end{itemize}



Let us now describe the main algorithm that consists of 
 {\bf \pp ($M_e, r, \{r\}$)} call, where $r$ is the root of the decision tree,  and initially the visited tree consists of set $\{r\}$.



{\bf Algorithm \pp($M', x,T$)}

 {\em Input:}  a partial depth-first visit tree $T$ of the decision tree $\cal T$ and a leaf  node $x$ of ${\cal T}$,  a partial completion $M'$  obtained by the realization of  the  characters labeling a path $\pi$ from $r$ to node $x$

 {\em Output:} the tree $T$ extended with the depth-first visit of $\cal T$  from  node $x$.
The procedure eventually outputs a successful reduction or a complete visit of $\cal T$ that fails to find such a successful reduction.

\begin{itemize}

\item Step 1.  if  the edge incident to node $x$ is labeled $c$, then  {\bf Realize($c, M',$ \grb )}. 

 If the matrix $M'$ has a  forbidden matrix, then label $x$ as a {\em fail} node. Otherwise, if $x$ is a leaf node, then mark $x$ as a  successful node and output the permutation labeling the path from the root $r$ of tree $T$ and leaf node $x$.

\item Step 2. For each node $x_i $  that is a  child of node $x$ in tree $\cal T$ and is labeled by a non active character, 
apply  {\bf \pp}($M', x_i, T \cup \{x_i\}$).

\end{itemize}

\subsection{Complexity}
The worst case time  of the algorithm is achieved when  the whole permutation tree $\cal T$ is visited. Generating all permutations requires $m!$ time.
Each time a character $c$ is realized  all species of the matrix are examined in the worst scenario.   Moreover, the connected components of the red-black graph must be updated each time.
Thus, the realization of  $m$ characters has a time complexity that is $O(n \cdot m) \times O(g(n,m))$, where $g(n,m)$ is the cost of maintaining connected components of a graph whose size is $O (n^2 \times m)$.  Since red edges are added  to the graph, in the worst  scenario each species   will have $O(n)$ incident red-edges. 

A trivial implementation of the connected component update  would require $\alpha(n^2 \cdot m )$ each time a character is realized, $\alpha$  being  the inverse of the Ackerman function. More efficient  implementations can be obtained  by  using dynamic algorithms \cite{Holm}.   Thus  totally,  the time is $O(m! \cdot n \cdot m \cdot \alpha(n^2 \cdot m))$.
This time improves over the complexity of a trivial   algorithm that  tries all possible substitutions for the pairs $(?,?)$, and would require a worst time that is   exponential in both the number of species and columns of the input matrix.

\section{An experimental analysis}
\label{experiment}

In  this section we discuss an implementation of  the {\bf  \pp} algorithm. In order to optimize the time complexity, an ad hoc  iterative version of the algorithm has been implemented.
 
We have implemented  and tested the {\bf  \pp} algorithm over simulated data produced by  the tool  \textit{ms} by Hudson \cite{Hu}.   The test set consisted in a random data set of matrices generated with a recombination rate of $1$ over $15$.  The main goal of the experimental analysis has been to test the applicability of the algorithm to matrices with different complexities in terms of size and number of conflicts (i.e. edges)  in the conflict graph.

We have implemented the algorithm   in C++ and the 
experiments have been run on a standard windows workstation with 4 GB of main memory. 

A preliminary experiment has been done to evaluate the performance of the algorithm with respect to specific parameters related 
to the complexity of the input matrix under mutation events.
Table \ref{tab1} reports  the time computation to solve sets of $50$ matrices for each dimension $(50,15)$,  $(100,15)$,  $(200,15)$, and  $(500,15)$ with a recombination rate $1$ over $15$. 
The table has additional entries to specify  the average time to solve a single matrix (calculated as the ratio between the total time and the number of considered matrices), the  number of matrices that do not admit a p-pp tree, the total number of conflicts in the conflict graphs of the matrices of each set, and the average number of conflicts for each matrix of the set.  Notice that conflicts are measured as the number of edges in graph $G_c$.
Each considered matrix has a conflict graph $G_c$  that consists of a single 
non trivial component. The sets contain only matrices that are solved within 5 minutes. Clearly, the number of unsolved matrices increases with the size of the input matrices.

\begin{table*}
\caption{Summary table}
\label{tab1}
\centering
\begin{scriptsize}
\begin{tabular}{|c|c|c|c|c|c|}
\hline
 {\bf nxm} & {\bf  total time in sec. } & {\bf average time in sec.} & {\bf no p-pp} & {\bf total conflicts} & {\bf average conflicts} \\
\hline
     50x15 &         32.323 &    0.646    &     6    &     236   & 4.72   \\
     100x15 &    194.625      &   3.893    &    4     &    175  & 3.5     \\
     200x15 &    43.212     &     0.864   &     3     &     147  & 2.94    \\
     500x15 &    889.433     &    17.789    &    7      &    219 & 4.38      \\
\hline
\end{tabular}
\end{scriptsize}
\end{table*}

Observe that in general the average running time of the algorithm increases with the size of the input matrix but also with the number of conflicts   that are present in the conflict graph.
This last behavior is suggested  by the results reported in Table \ref{tab2} .

\begin{table*}
\caption{Average execution time in seconds to solve  $10$ matrices with a single conflict. }
\label{tab2}
\centering
\begin{scriptsize}
\begin{tabular}{|c|c|c|c|}
\hline
 {\bf 50x15} & {\bf 100x15} & {\bf 200x15} & {\bf 500x15} \\
\hline
0.015&	0.031&	0.047&	0.093 \\
\hline
\end{tabular}
\end{scriptsize}

\end{table*} 

Another experiment has been done with 10 matrices of the same size $50 \times 15$ and different number of edges in the conflict graph. The average time was  $0.015$,     $0.031$ and  $0.051$, respectively for the case of $1, 5$ and $10$ conflicts. 

In order to test the performance of the algorithm for large matrices in terms of number of species we have processed a matrix of size $1000 \times 15$ with a conflict graph having $9 $ conflicts (edges).  It took $35.5$ seconds to find the solution to the matrix.

\section{Conclusions}

In this paper, we formalize the problem of reconstructing a Persistent Perfect Phylogeny over binary values (P-PP problem); the problem generalizes  the classical directed perfect phylogeny by allowing each character to change from $1$ to $0$ at most once in the tree.
Then, we show that  solving the problem P-PP reduces to a graph-reduction problem.   Based on this combinatorial interpretation of the problem of the persistent perfect phylogeny  problem, we give an exact algorithm for the P-PP problem that has a worst time complexity that is exponential in the number of characters,  but polynomial in the number of species. 
An experimental analysis of the implemented algorithm for the P-PP problem  shows the applicability of the model  to incorporate biological complexity due to  back mutation events in the data.

\section{Acknowledgement}
The first author, PB would like to thanks Russel Schwartz for indicating applications of the model. Moreover, the authors would like to thank Gianluca Della Vedova for comments on a preliminary version of the paper and Francesca Scoglio for technical support in the experimental analysis.
All authors are grateful to the anonymous referees for helpful suggestions on the presentation of the paper.

\bibliographystyle{abbrv}
\bibliography{biblio-pph}


\end{document}